\newcommand{\E}{\mathbb{E}}
\newcommand{\ed}{\mathrm{d}}
\newcommand{\R}{\mathbb{R}}
\newcommand{\id}{{\bf 1}}
\renewcommand{\P}{\mathbb{P}}
\newcommand{\Q}{\mathbb{Q}}
\newcommand{\SP}{\mathrm{SP}}
\newcommand{\AL}{\mathrm{AL}}
\newcommand{\TL}{\mathrm{TL}}
\newcommand{\CPI}{\mathrm{CPI}}
\newcommand{\DT}{\ed {\cal T}(t)}
\newcolumntype{L}[1]{>{\raggedright\let\newline\\\arraybackslash\hspace{0pt}}p{#1}}
\DeclareMathOperator{\spow}{sp}
\DeclareMathOperator{\argmax}
\newtheorem{theorem}{Theorem}
\newtheorem{lemma}[theorem]{Lemma}
\newtheorem{algorithm}[theorem]{Algorithm}
\newenvironment{axiom}[1]
{\inneraxiom}
{\endinneraxiom}
\theoremstyle{definition}
\newtheorem{definition}[theorem]{Definition}
\numberwithin{equation}{section}
\numberwithin{theorem}{section}
\newcommand{\citep}[1]{\cite{#1}}
\begin{document}

\author{John Armstrong, Cristin Buescu}

\title{Collectivised Post-Retirement Investment}

\date{}              


\maketitle

\begin{abstract}
\noindent We quantify the benefit of collectivised investment funds, in which the assets of members who die are shared among the survivors. For our model, with realistic parameter choices, an annuity or individual fund requires approximately 20\% more initial capital to provide as good an outcome as a collectivised investment fund. We demonstrate the importance of the new concept of {\em pension adequacy} in defining investor preferences and determining optimal fund management. We show how to manage heterogeneous funds of investors with diverse needs. Our framework can be applied to existing pension products, such as Collective Defined Contribution schemes.

\end{abstract}

\noindent We study investment funds where all investors agree that any funds left when a member dies should be shared among the survivors. We call this new financial product a collectivised investment fund. An investor may choose to invest in such a collectivised fund at retirement instead of purchasing an annuity or managing an individual fund. We will argue that such funds should be particularly attractive to investors and will quantify the benefit of such a fund. 
While various types of risk-sharing funds have been proposed before, our proposal
differs by providing a clear framework to describe the optimal management of the fund, and by describing how risk should be shared.

While collectivised investment funds can be viewed either as an interesting financial product in their own right, they also provide a model for the post retirement phase of a Collective Defined Contribution (CDC) pension
scheme.
Such schemes are still very new and
the precise manner in which CDC funds are managed varies from scheme to scheme. Broadly
speaking a CDC fund is one which is managed for the benefit of a group of individuals
and endeavours to obtain a good pension for all its members, but which does not promise
a precisely defined pension. Examples of such schemes include the New Brunswick Hospitals' plan in Canada \citep{ppiCanada},
a number of Dutch pension schemes \citep{bovenberg,ppiDutch} and the planned new pension scheme for Royal Mail in the UK \citep{ppiCDC}.

CDC funds have emerged for two reasons. Firstly, low yields and tighter regulation have made traditional
employer pension schemes unattractive. Secondly, separate legislative changes, such as the Pension Schemes Act 2015 
in the UK,  allow much greater flexibility in how pensions
can be invested. 

Focusing on the UK as an example, pension funds have historically been either {\em defined-benefit} (DB)
funds or {\em defined-contribution} (DC) funds. In a DB fund an employer promises
to pay their employees and their partners a pre-specified income from retirement until death.
Often these benefits would be index linked, i.e.\ they would provide a constant real-terms income.
In a DC fund an employee has a personal fund of pension investments. Historically the assets
in a DC fund were used to purchase an annuity at retirement, i.e.\ a financial product that would
pay a pre-specified income to the retiree and their partner until death (typically a constant real-terms
income). As a result of the 2015 pension reform, it is now possible to receive the tax benefits
afforded to pension investment without being restricted to such a narrow range of investments.
The legislation introduces a new framework of ``defined ambition'' schemes into which 
CDC schemes fit.

In this paper, we will study the benefits of collectivised pensions to employees. We note, however, that CDC funds are
also very attractive to employers as they remove the liabilities inherent in a DB
scheme from their balance sheet, but we will not attempt to quantify this.  We note also that there will be legal and taxation considerations that one should take into account
in order to develop collective funds as a new financial product,
but these are jurisdiction specific and beyond the scope of this paper.

\medskip

There are a number of reasons why one would expect a collectivised investment fund to yield better pension outcomes than an annuity,
and for a CDC fund should outperform a DB fund.
\begin{enumerate}
	\item  \label{reason:equity} For large funds, ignoring systematic longevity risk, one can assume that
	an employer's DB liabilities depend only on interest rates. Assuming a typical risk-neutral
	pricing model for interest rate products, this means that a fully-funded DB scheme will not
	use equity investments, and so cannot benefit from the equity risk-premium.
	\item  \label{reason:intertemporalSubstitution}  A constant real-terms income does not benefit from the possibility of {\em intertemporal substitution}. This is the observation that if one is willing to delay consumption
	in favour of investing for longer, one may be able to obtain a higher rate of consumption in the future
	leading to a preferable outcome.
	\item  \label{reason:statePension} Due to 
	changes in the level of the state pension, the optimal deterministic real-terms income will change over time.
\end{enumerate}
We will also argue that there is an additional, less obvious, reason why constant consumption is
suboptimal.
\begin{enumerate}
	\setcounter{enumi}{3}
	\item  \label{reason:risk} A constant real-terms income ignores the risk of dying young and not enjoying any 
	consumption. It also ignores the risk of living on an inadequate pension for many years.
\end{enumerate}

A collectivised investment fund should also outperform an individual fund, and a CDC fund should also outperform
a DC fund. This is because 
collectivisation should reduce idiosyncratic longevity risk.

The primary research questions that this paper seeks to answer are: (i) How significant are these various effects are in practice? (ii) How should a fund be managed to best exploit these effects?

In order to highlight the effects of collectivisation, we will answer these research questions in the
context of the simplest market and longevity model that is capable of modelling all these effects.
Our market model will be a Black--Scholes--Merton model, and we will assume that future mortality distributions
are known so that there is no systematic longevity risk. For the same reasons, we consider only the post-retirement
investment and consumption and we do not model any form of bequests. Pension schemes are often intended
to provide for a couple in retirement. We model this as two individuals purchasing independent
shares in a collectivised fund.

\medskip

A first contribution of this paper is to compute
detailed numerical examples of pension outcomes based on reasonable market assumptions
in order to quantify the benefits of collectivisation. We find that either an annuity or
an individual fund requires $20\%$ more initial capital to be as effective as a collectivised fund.

A second contribution of this paper is an understanding of how to model investor
preferences over pension outcomes. The model we choose for investor preferences will play a critical
role in our theory. We will argue that the classical model for preferences, intertemporally additive
von Neumann--Morgernstern
utility, (used, for example, in \cite{merton1969lifetime}) are essentially risk-neutral. Conversely
annuities appear to be designed for the infinitely risk-averse. We find that using preferences
which allow for a moderate level of risk-aversion leads to much more plausible results, which
we believe will be more appealing to investors. We first study preferences from a theoretical
point of view  building on works such as \cite{vonNeumannMorgernstern},
\cite{krepsPorteus}, \cite{kihlstromMirman} and \cite{epsteinZin1} but
incorporating mortality.
We then compare the possible preference models numerically. Together our
theoretical and numerical results leads to a clear choice of the most
appropriate model for our problem. We will see that it is essential to use a preference
model which allows for flexible specification of what we call the {\em adequacy level}.
Our results also highlight the importance of risk-aversion,
leading to the somewhat unexpected phenomenon outlined in point \eqref{reason:risk}
above.

A third contribution of this paper is an algorithm for managing
heterogeneous funds, i.e.\ funds  containing individuals with diverse preferences, wealth
and mortality distributions. This is not a standard optimal control problem, as it is not clear how to 
define an objetive for the fund as a whole. Nevertheless we showed in \cite{ab-heterogeneous} that it
is possible to bound the utility that can be obtained by a collective fund. In the example
tested numerically in this paper, our proposed algorithm achieves $98\%$ of the maximum
potential benefit  of collectivisation for a heterogeneous fund of only $100$ members.

Together with the technical results of 
\cite{ab-heterogeneous}, these last two contributions provide
a rigorous framework for understanding collective pension investment.
The potential advantages of such pension products has been observed
before, leading to the development of with-profits annuities.
However, without a rigorous mathematical underpinning, it has been unclear how
to manage such products in the investor's best interests and as a result it has been unclear how
best to define contracts to guarantee that this is done \citep{smart}. We believe our framework is
capable of remedying these issues.

While we have studied only post-retirement investment, it follows
from our results that a CDC fund (managed optimally) will outperform a DB fund from the point of view
of the pensioner. To see this note that if one had a DB fund giving a guaranteed income, a collective
could sell this guaranteed income stream and then use the proceeds to pursue the optimal investment
strategy of this paper. We emphasize that this argument only applies to
an optimally managed CDC fund. Existing CDC funds are designed and managed using a
variety of heuristics rather than by solving an optimal investment problem.
Thus this paper has significant implications for the management of such funds.

\medskip

Having described our findings, let us now describe the structure of the paper.

In Section \ref{sec:prefs} we introduce the topic of preferences over pension outcomes,
before proposing two concrete models for preferences in Sections \ref{sec:exponential} and
\ref{sec:epsteinZinTheory}. In Section \ref{sec:model}
we select concrete market, preference and longevity models calibrated
to the UK pension market. In Section \ref{sec:comparison} we use this
to quantify the benefits of collectivisation for infinite funds
of identical investors using our preferred preference model. In Section \ref{sec:epsteinZin} we examine how
our results change if we use a different preference model, and use this
to identify the most appropriate choice to use in practice.
In Section \ref{sec:heterogeneousNumerics} we drop the assumption that
the fund contains infinitely many identical investors and give
an algorithm for managing heterogeneous funds. We test the efficacy
of this algorithm. We end in Section \ref{sec:conclusion}
with a summary of the financial consequences.

\medskip

\section{Preferences with Mortality}
\label{sec:prefs}

We must choose a model of investor preferences. The
choice of preferences will determine the meaning of optimal fund management, and so
this choice plays a central role in this paper. We should, therefore, justify our preference model in some detail.

If one ignores mortality for the moment, in the
current literature, homogeneous Epstein--Zin preferences seem to be the most popular model for preferences over consumption streams. These
preferences were introduced by \cite{epsteinZin1} and have been successfully applied to provide potential resolutions
to various asset pricing puzzles \citep{bansalYaron, bansal, benzoniEtAl, bhamraEtAl}. These preferences are {\em homogeneous} in 
the sense that if one prefers income $\gamma_1$ to $\gamma_2$ then one will prefer $\lambda \gamma_1$ to $\lambda \gamma_2$ for all positive
reals, $\lambda$. Homogeneity is a symmetry which ultimately results in a dimension reduction of the Hamilton Jacobi Bellman (HJB) equation yielding
a more tractable model.
Given the success of this
preference model, it is very natural to incorporate mortality into homogeneous Epstein--Zin preferences, and it is
easy to see how to do this in a manner that preserves the homogeneity property. This approach has already been taken in a number
of papers such as \cite{Gomes2005, Gomes2008,Bommier2017, Drouhin2015} which deal with mortality directly, as well as mathematical
works such as \cite{aurandHuang} which discusses random horizons in general.

However, we will argue that for pension investment problems, homogeneity of preferences is undesirable. We will define a
notion of {\em pension adequacy} for an individual's preferences. Any finite, non-zero choice of adequacy level
will automatically break the homogeneity of the preferences. We will
later confirm numerically that considering finite, non-zero adequacy levels
leads to quite different numerical results to those obtained with homogeneous preferences.

In Section \ref{sec:preferenceRelations} we will establish the necessary technical vocabulary for a theoretical comparison
by defining the notion of pension outcomes and of
preferences over such outcomes. We define a number of desiderata for a preference model
and define pension adequacy.

In Section \ref{sec:exponential} we identify a preference model from first principles which
we call {\em exponential Kihlstrom--Mirman preferences with mortality},
as it incorporates mortality into the preference models of \citep{kihlstromMirman,Kihlstrom2009}.
We will see that this model meets all our desiderata.

In Section \ref{sec:epsteinZinTheory} we present the alternative approach of homogeneous Epstein--Zin preferences
with mortality. We explain how this model could be modified to incorporate a flexible model for pension adequacy, but at
the cost of homogeneity.

\subsection{Preference relations over pension outcomes}
\label{sec:preferenceRelations}

We model a ``pension outcome'' as a pair $(\gamma,\tau)$ consisting of a stochastic process
$\gamma_t$, representing the rate at which payments are received at time $t$, and
a random variable $\tau$ representing the time of death. The underlying
filtered probability space will be denoted by $(\Omega, {\cal F}, {\cal F}_t, \P)$. The units
of $\gamma_t$ should be taken to be in real terms to ensure that our models
for inflation and preferences are separate.

We consider both discrete and continuous cashflow processes
$\gamma_t$. We write ${\cal T}$ for the set of time indices which
may be either $[0,T)$ or the evenly spaced time grid $\{ 0,\, \delta t,\,  2 \delta t,\,  3 \delta t, \ldots, T-\delta t \}$ where
$T$ is an upper bound on an individual's possible age which may be infinite. We write $\DT$ for the measure determined by the index set: this will be the Lebesgue measure on $[0,\infty)$ in the continuous case, or the sum of Dirac masses of mass $\delta t$ at each point in ${\cal T}$ for the discrete case. It will occasionally be convenient to allow the cashflow $\gamma_t$ to be non-zero when $t>\tau$, but this cash will not be consumed.
In the discrete case we assume that cashflow at the moment of death $\gamma_\tau$ is still consumed. So the total consumption
over the lifetime of an individual is
\[
\int_0^\tau \gamma_t \, \DT.
\]

We wish to describe an individual's preferences over pension
outcomes. This will be represented by an ordering $\preceq$ on the
set of pairs $(\gamma, \tau)$.
The outcome $(\gamma, \tau)$ is considered preferable to the
outcome $(\tilde{\gamma}, \tilde{\tau})$ if $(\tilde{\gamma}, \tilde{\tau}) \preceq (\gamma, \tau)$. We define $\succeq$ in the
obvious way and write $x \sim y$ if $x \preceq y$ and $y \preceq x$. We assume that an individual is indifferent to cashflows after death. This can be expressed mathematically as
\[
(\forall t\leq \tau \quad \gamma_t=\tilde{\gamma}_t )  \implies (\gamma_t,\tau) \sim
(\tilde{\gamma}_t,\tau).
\]

We will now define various properties that a preference relation may possess and which may be
considered desirable.

\begin{definition}
	The preferences are {\em monotonic} if $(\gamma, \tau) \preceq (\gamma^\prime, \tau)$ if $\gamma_t \leq \gamma^\prime_t$ for all $t \in {\cal T}$.	
\end{definition}
This simply reflects preference for consuming more.

We would like the preferences to depend only on the probabilistic properties of $\gamma$ and $\tau$ and not on
any extraneous data. We will formalize this requirement as the concept of {\em invariance}. To
define this, we first
recall that a mod 0 isomorphism is a measure preserving bijection from a full subset of a probability
space $\Omega$ onto a full subset of another probability space $\Omega^\prime$ with
measurable inverse. An automorphism
of a filtered probability space is a mod 0 isomorphism of probability
spaces that acts as a mod 0 isomorphism on each element of the filtration.

\begin{definition}
	The preferences $\preceq$ are \em{invariant} if for any 
	automorphism, $\phi$, of the filtered probability space $(\Omega, {\cal F}_t, \P)$ we have that $\phi$ preserves $\preceq$.
\end{definition}

Our framework is very similar to the descriptive model presented in Section 4 of \cite{krepsPorteus}, but differs
in that we consider preferences over random variables rather than preferences over distributions of random variables. Requiring
invariance acts as a substitute for defining the preferences over distributions. We will not repeat the axioms
of \cite{krepsPorteus}, but we note that the specific preference models we will ultimately use in this paper
will also satisfy their axioms.

\begin{definition}
	The preferences $\preceq$ are \em{law-invariant} if they depend only
	on the joint distribution of $(\gamma_t,\tau)$.	
\end{definition}

If preferences depend upon the time at which information becomes available, they may be invariant but
not law-invariant. \cite{krepsPorteus} developed their theory to allow the study of preference
relations which depend upon the timing of the resolution of uncertainty and \cite{epsteinZin1} provides
some discussion of when this may be desirable. However, for normative pension investment with no
exogenous investment opportunities, law-invariance seems a desirable feature as it is hard to justify why
one might be willing to pay to receive (or not receive) information which one cannot act upon.

\begin{definition}
	The preferences $\preceq$ are {\em convex} if for any $\gamma$ and $\tau$
	the set 
	\[
	\{ \tilde{\gamma} \mid (\gamma,\tau)\preceq(\tilde{\gamma},\tau) \}
	\]
	is convex.
\end{definition}
Convex preferences are mathematically desirable as one may then apply the tools of convex analysis. As
we shall see, convex preferences arise naturally as a consequence of the concepts of satiation and risk-aversion.

Given cashflows $\gamma_t$ defined on an interval
$t \in [a,b)$ and cashflows $\tilde{\gamma}_t$
defined on an interval $t \in [b,c)$ we define the concatenated
cashflow on $[a,c)$ by
\[
(\gamma \oplus \tilde{\gamma})_t=\id_{[a,b)}(t)\, \gamma_t + \id_{[b,c)}(t)\, \tilde{\gamma_t}.
\]

\begin{definition}
	The preferences $\preceq$ are {\em Markovian} if
	for any cashflows $\gamma_{\alpha,t}$, $\gamma_{\beta,t}$ defined on
	the finite interval $[0,a)$ with $a$ of measure zero (i.e.\ $a$ is not a grid point in the discrete case) and any cashflows $\gamma_{1,t}$,
	$\gamma_{2,t}$ defined on $[a,\infty)$
	\[
	(\gamma_\alpha \oplus \gamma_1, \tau)
	\preceq (\gamma_\alpha \oplus \gamma_2, \tau)
	\iff 
	(\gamma_\beta \oplus \gamma_1, \tau)
	\preceq (\gamma_\beta \oplus \gamma_2, \tau)
	\]
\end{definition}

This definition captures the case when future preferences do not
depend upon the past. There is no logical reason to insist that
preferences should behave in this way: for example if one has
purchased a house, the anticipated cost of housing repairs might
well affect one's future preferences.

Markovian preferences
are desirable mathematically because they result in more
tractable problems: if one has non-Markovian preferences then one must keep track of additional state variables
when solving optimal control problems and this increases the dimension
of the HJB equation. Markovian preferences are desirable from the point of view
of parsimony as one need not choose an initial state.

\begin{definition}
	The preferences $\preceq$ are {\em stationary} if for all $a$ there exists
	an isomorphism of filtered probability spaces
\[
\phi:(\Omega, {\cal F}, ({\cal F}_t)_{t\geq a}, \P) \to (\Omega, {\cal F}_a, \P) \times (\Omega, {\cal F}, ({\cal F}_t)_{t\geq 0}, \P)
\]
	such that for any cashflows $\gamma_{\alpha,t}$ defined on
	a finite interval $[0,a)$ with $a$ of measure $0$, any cashflows $\gamma_{1,t}$,
	$\gamma_{2,t}$ defined on $[0,\infty)$
	\[
	(\gamma_\alpha \oplus (\gamma_1\circ \phi), \tau \circ \phi + a)
	\preceq (\gamma_\alpha \oplus (\gamma_2 \circ \phi), \tau \circ \phi + a)
	\iff 
	(\gamma_1, \tau)
	\preceq (\gamma_2, \tau)
	\]
\end{definition}

This definition captures the case when preferences over future cashflows remain constant in time. The isomorphism $\phi$
is required in order to define preferences at future times in terms of preferences at time $0$.
Stationarity implies Markovianity.
Stationary preferences are particularly
parsimonious as one does not have to justify how the
preferences vary in time. Stationary preferences are very attractive in
infinite-horizon problems as they lead to a time-symmetry
of the HJB equation, which then allows the dimension to be reduced.

Our notion of stationary preferences corresponds to ``stationarity of preference'' in \cite{koopmans} (we say ``corresponds to'' because our set-up is slightly different). 
It is related to the concept called ``intertemporal consistency of preference'' (\cite{krepsPorteus,johnsenDonaldson}) and ``recursive preferences'' (\cite{epsteinZin1}). However, as we have not specified preferences at future times,
we have no need for an axiom of intertemporal consistency. The preferences at future times are described implicitly by preferences at time $0$ and the
requirement of temporal consistency (as explained by \cite{krepsPorteus}). Stationarity then requires that these implicit preferences at future times are isomorphic to the preferences
at time $0$.

\begin{definition}	
	An {\em adequacy level} for preferences ${\preceq}$ is
	a random process such that one is indifferent between dying at a particular time and living longer 
	while receiving an income at the adequacy 
	level. Formally, an ${\cal F}_t$-adapted, process $a_t$ is an {\em adequacy level} for the preferences $\preceq$ if
	\begin{enumerate}
		\item $\tau < \tilde{\tau}$;
		\item $\forall t\in[0,\tau] : \gamma_t=\tilde{\gamma_t}$;
		\item and $\forall t\in(\tau,\tilde{\tau}]: \tilde{\gamma}_t=a_t$.
	\end{enumerate}
	together imply $(\gamma,\tau) \sim (\tilde{\gamma},\tilde{\tau})$. If
	death is better than any finite cashflows we will say that the adequacy level is $\infty$. If death
	is worse than any finite cashflows we will say that the adequacy level is $-\infty$.
\end{definition}

\begin{definition}
	\label{def:vonNeumannMorgernstern}	
	{\em Inter-temporally additive von Neumann--Morgernstern preferences with mortality} are determined
	by a choice of concave, increasing utility function $u:\R \to \R$
	and a discount rate $b$. The preferences for $(u,b)$ on pension outcomes with non-negative
	cashflows are
	\begin{equation}
	(\gamma,\tau) \preceq (\tilde{\gamma},\tilde{\tau})
	\iff
	\E\left( \int_0^\tau e^{-bt} u( \gamma_t ) \DT \right) \leq \E\left( \int_0^{\tilde{\tau}} e^{-bt} u( \tilde{\gamma}_t ) \DT \right).
	\label{eq:vnmDef}
	\end{equation}
\end{definition}

This definition is based on \cite{vonNeumannMorgernstern}. These preferences are
montonic, convex, invariant, law-invariant, Markovian and stationary with an adequacy level of $u^{-1}(0)$.
In control problems where one cannot control mortality,
the adequacy level of these preferences is unimportant. To see why, observe that
\[
\E\left( \int_0^\tau (u(\gamma_t) + c ) \, \DT \right)
= \E\left( \int_0^\tau u(\gamma_t)\, \DT \right)  + \E\left( \int_0^\tau c \, \DT
\right).
\]
The term on the right is independent of the cashflows $\gamma$
and so the preferences are unchanged when one adds a constant $c$
to the utility function $u$. However, the adequacy level would become
important in problems where $\tau$ could be controlled, for example,
in a problem where one may choose to increase health-care expenditure
to increase life-expectancy.

Although inter-temporally additive von Neumann--Morgenstern preferences have many attractive properties,
we will
argue in the next section that they fail to adequately model risk-aversion. 
In models which include
risk-aversion, we will find that the adequacy level plays a role even if one cannot
control mortality.

\subsection{Exponential Kihlstrom--Mirman Preferences}
\label{sec:exponential}

We will now see how a number of simple considerations lead to a particular
form of preference model which we call exponential Kihlstrom--Mirman preferences.
We will suppose that there is an upper bound $T$ on the duration of an individual's life, so only
$\tau \leq T$ are considered admissible and we shall work with continuous time consumption.

Let us consider an individual's preferences over deterministic outcomes $(\gamma,\tau)$. We will assume
the individual is {\em order indifferent}, which we define to mean that their preferences depend only
on the distribution function of $\gamma$ over time defined by:
\[
F_{\gamma}(x) = \int_0^T \id_{\gamma_t \leq x} \, \DT.
\]	
Let us also suppose that their is a fixed, deterministic adequacy level $a$. Let us write
$\bm{a}$ for a constant income stream at the adequacy level. We may
out any deterministic consumption stream $(\gamma, \tau)$ to the right at the adequacy level to
obtain an equally preferable outcome $(\gamma \oplus \bm{a}, T)$. So an individual's preferences over
deterministic consumption streams are then determined by their preferences over 
the distribution functions of $\gamma \oplus \bm{a}$.

Preferences over distribution functions were studied by \cite{vonNeumannMorgernstern} who showed
that under modest axioms, preferences over distribution functions are determined by an expected utility.
Although they had probability distributions in mind, rather than temporal distributions,
the two problems are mathematically identical.

Rather than duplicate the axioms of von Neumann and Morgernstern, we propose a single axiom which captures
their results together with the notions of order indifference and a deterministic adequacy level.

\begin{axiom}
	Preferences between deterministic outcomes $(\gamma, \tau)$
	and $(\tilde{\gamma}, \tilde{\tau})$ are described by a utility function $u:\R \cup \{a \} \to \R$ with
	\[
	(\gamma, T)\preceq (\tilde{\gamma}, T)
	\iff
	s(\gamma_t, \tau)\leq s(\tilde{\gamma}_t, \tilde{\tau})
	\]
	where
	\[
	s_t:= \int_0^{\tau} u(\gamma_t) \, \DT.
	\]
	We call $s_t$ the {\em satisfaction} associated with $(\tilde{\gamma},\tilde{\tau})$.
	Note that $u$ is only determined up to scale.
	\label{axiom:orderIndifference}
\end{axiom}
\begin{axiom}
	An individual's preference over pension outcomes $(\gamma,\tau)$ are given by a von Neumann--Morgernstern
	preference relation over satisfaction.
	\label{axiom:vnmSatisfaction}
\end{axiom}

This discussion leads to the following definition.
\begin{definition}
	\label{def:kihlstromMirman}	
	{\em Kihlstrom--Mirman preferences with mortality} are determined by a
	choice of concave, increasing utility function $u:\R \to \R$, a second 
	increasing function $w:\R \to \R$ and a discount rate $b$.
	The preferences for $(u,w,b)$ on pension outcomes are
	\[
	(\gamma,\tau) \preceq (\tilde{\gamma},\tilde{\tau})
	\iff
	\E\left( w \left( \int_0^\tau e^{-bt} u( \gamma_t )\, \DT \right) \right)
	\leq \E\left( w \left( \int_0^{\tilde{\tau}} e^{-bt} u( \tilde{\gamma}_t )\, \DT \right) \right).
	\]
	{\em Von Neumann-Morgernstern preferences with mortality} arise in the special case $w(x)=x$.
	We will call the case $w(x)=-e^{-x}$ and $b=0$ {\em exponential Kihlstrom--Mirman preferences}.
\end{definition}
If one replaces $\tau$ with a deterministic time $T$ one obtains the preferences without mortality of \cite{kihlstromMirman}.
We see that if an individual's preferences satisfy Axioms \ref{axiom:orderIndifference} and \ref{axiom:vnmSatisfaction}
then their preferences must be Kihlstrom--Mirman preferences with mortality and  the discount rate $b$ must equal $0$.

\medskip

We believe that the most important assumption we have made is order indifference. This is
an important assumption for our normative pensions model. We make this assumption because we believe that
an individual's pension in old age should be given equal weight to their pension at retirement.

This assumption is controversial and so merits further discussion.
There are a number of reasons why one might
include discounting in a preference model. Firstly, one might use discounting as a proxy for directly modelling mortality.
This idea is justified in \cite{ab-exponential} where it is shown for a specific model that the force of mortality
and the discount factor play mathematically equivalent roles. However, our model includes mortality endogenously.
Secondly, in a descriptive model, 
one might use discounting to model an irrational bias towards early consumption. However, our model is normative.
Third,  one might use discounting to represent exogenous investment opportunities. However, we seek to model
the entire market endogenously.

As well as assuming that equal weight is given to all ages, the assumption of
order indifference requires that the utility function and the adequacy
level remain constant over time. We believe this is reasonable if 
one asks what
form a preferences should take in a model that
consciously chooses to ignore any features of pension outcomes other
than mortality and cashflows. However, it may be beneficial to relax these requirements
to allow for more flexible modelling. For example, we will allow the adequacy level to change over time in 
our numerical work in order to model a non-constant deterministic state pension.

\medskip

We will say that Kihlstrom--Mirman preferences are {\em monotone} if both $u$ and $w$ are monotone increasing,
so that preferences are increasing as a function of satisfaction, and satisfaction is increasing as a function of consumption. We will say that they model satiation if $u$ is concave, in which 
case there will be diminishing returns at higher levels of consumption as $u^{\prime}(c_1) \leq u^{\prime}(c_2)$ if $c_1 < c_2$.
The term ``satiation'' is non-standard
with most authors preferring to talk in terms of intertemporal substitutability (e.g.\ \cite{kihlstromMirman, epsteinZin1, duffieEpstein, xing}).
We prefer the term satiation partly because it is more intuitive and easier to say. It also refers
to the preferences themselves: by
contrast intertemporal substitution refers to the resulting behaviour when interest rates are changed and so incorporates
the market model into the terminology for preferences.

We will say the preferences are {\em satisfaction-risk-averse} if $w$ is concave.
This is the assumption that we would prefer to receive the satisfaction $\E(s)$ with certainty than to receive a random satisfaction $s$.
Since Axiom \ref{axiom:orderIndifference} presupposes that satisfaction, being an integral, has additive properties, it is reasonable to take expectations of
satisfaction. This is important because the concept of risk-aversion
is not topologically invariant and depends upon the additive
structure of $\R$.

There is an alternative additive structure one could consider, namely the structure defined by the additivity of cash values and this gives rise to an alternative concept of risk-aversion. Given
preferences satisfying our axioms, we may define the constant
cash equivalent of a deterministic cashflow $(\gamma,\tau)$ by
\[
c(\gamma,\tau)=u^{-1} \left( \frac{1}{T}
\int_0^\tau u(\gamma_t) \, \DT \right).
\]
We may then write our preferences over non-deterministic cashflows as
\[
(\gamma, \tau) \preceq (\gamma^\prime, \tau^\prime)
\iff
\E( w(T u(c(\gamma,\tau)) ) \leq \E( w(T u(c(\gamma^\prime,\tau^\prime)) ).
\]
This leads to the definition that these preferences are {\em monetary-risk-averse} if the function $x\to w(T u(x))$ is concave.

We see that Kihlstrom--Mirman preferences successfully separate an individual's satiation preferences
and an individual's risk preferences.

If one agrees that satisfaction-risk-aversion
is the correct operationalization of the intuitive concept of risk aversion,
one is lead to the conclusion that inter-temporally additive von Neumann--Morgernstern preferences do not model
risk aversion at all. This is a rather stronger statement than the more familiar observation that
inter-temporally additive von Neumann--Morgernstern preferences fail to disentangle risk aversion and
satiation \cite{duffieEpstein,xing}.

If one insists on Markovianity, then, as was observed in \cite{epsteinZin1}, one may identify the function $w$.
\begin{lemma}
	\label{lemma:exponentialKM}	
	Kihlstrom--Mirman preferences with mortality in continuous time are Markovian
	if and only if $w$ takes the either the form 
	\[ w(x)=c_1 \exp(c_2 x) + c_3 \quad \text{or} \quad w(x)=c_1 x + c_2\]
	for some constants $c_1,c_2,c_3 \in \R$ for $x \in U$ defined by
	on the set $U$ defined by
\[
U=(M \inf u, M \sup u), \quad M=\int_0^T e^{-bt} \, \DT
\]	
	They are stationary only if one additionally has $b=0$.
\end{lemma}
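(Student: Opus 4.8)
The plan is to recognise Lemma~\ref{lemma:exponentialKM} as a ``constant absolute risk aversion'' characterisation in disguise. Once a deterministic head cashflow $\gamma_\alpha$ on $[0,a)$ is fixed, Markovianity asks that the ordering of the possible \emph{tail} satisfaction distributions be unaffected by the satisfaction already accumulated, and it is classical that this forces the outer aggregator $w$ to be exponential--affine or affine. The organising observation is that, for deterministic $\gamma_\alpha$, the total lifetime satisfaction of $\gamma_\alpha\oplus\gamma_1$ with death time $\tau$ equals $\int_0^{\tau}e^{-bt}u(\gamma_{\alpha,t})\,\DT$ on the event $\{\tau<a\}$ (the tail is never consumed) and equals $A_\alpha+\int_a^{\tau}e^{-bt}u(\gamma_{1,t})\,\DT$ on $\{\tau\ge a\}$, where $A_\alpha:=\int_0^a e^{-bt}u(\gamma_{\alpha,t})\,\DT$ is a \emph{constant}. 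The measure-zero condition on $\{a\}$ lets us ignore the boundary event $\{\tau=a\}$. I would carry out the two implications of the ``if and only if'' using this decomposition and then treat stationarity separately.

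For the easy direction I would check that exponential and affine $w$ are Markovian. For affine $w=c_1x+c_2$ this is immediate from linearity of the expectation: the $\int_0^{\tau\wedge a}$-part depends only on $\gamma_\alpha$ and cancels in the comparison. For $w=c_1e^{c_2x}+c_3$ the decomposition above gives, on $\{\tau\ge a\}$, the factorisation $w(\text{total})=c_1e^{c_2A_\alpha}e^{c_2T_i}+c_3$ with $T_i=\int_a^\tau e^{-bt}u(\gamma_{i,t})\,\DT$, while on $\{\tau<a\}$ the integrand involves only $\gamma_\alpha$ and $\tau$; hence
\[
\E\left( w\left( \int_0^\tau e^{-bt} u\big((\gamma_\alpha\oplus\gamma_i)_t\big)\,\DT \right) \right) = c_\alpha + d_\alpha\, R_i ,
\]
where $R_i:=\E[\,\mathbf 1_{\tau\ge a}\,e^{c_2T_i}\,]$ depends only on the tail, $c_\alpha,d_\alpha$ depend only on $\gamma_\alpha$, and $d_\alpha=c_1e^{c_2A_\alpha}$ has a sign that does not depend on $\gamma_\alpha$. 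So the induced ordering of tails $R_1$ versus $R_2$ is $\gamma_\alpha$-independent, which is exactly Markovianity.

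For the converse, Markovianity tells us that for any two achievable head constants $A,B$ the orderings $S\mapsto\E\, w(A+S)$ and $S\mapsto\E\, w(B+S)$ agree on the class of achievable tail-satisfaction distributions $S$. This class is rich -- it contains every sufficiently small two-point lottery with support in an interval, realised by a random tail cashflow with $\tau$ held equal to $T$ -- so the uniqueness part of the von Neumann--Morgenstern representation gives
\[
w(B+t)=\lambda(B)\,w(t)+\mu(B),\qquad \lambda(B)>0 ,
\]
after normalising one reference value of $A$; monotonicity of $w$ makes $\lambda,\mu$ Borel measurable (ratios and differences of monotone functions of $B$). Letting $a$ range over $(0,T)$ extends this to all $t,B$ with $t,B,B+t$ in the set $U$. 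A standard composition argument (evaluating $w\big((B_1+B_2)+t\big)$ two ways, and using that $w$ is non-constant) gives $\lambda(B_1+B_2)=\lambda(B_1)\lambda(B_2)$; a measurable positive solution is $\lambda(B)=e^{c_2B}$ (or $\lambda\equiv1$), and back-substitution together with the symmetry $B\leftrightarrow t$ forces $w(t)=c_1e^{c_2t}+c_3$ when $c_2\ne0$ and $w(t)=c_1t+c_2$ when $c_2=0$, on $U$. I expect the two genuinely delicate points to be (a) verifying that enough satisfaction lotteries are realisable as genuine pension outcomes so that vNM uniqueness really bites on all of $U$, and (b) solving the functional equation under mere monotonicity rather than differentiability; (b) is the main obstacle, though both are routine once set up carefully.

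Finally, for stationarity I would substitute into the definition with parameter $a$ and use that $\phi$ is measure preserving: on the event of surviving past $a$, the accumulated head satisfaction is a constant $A_\alpha$ and the shifted tail contributes $e^{-ba}$ times a copy of the time-$0$ satisfaction $S$, so the left-hand preference reduces to the ordering of $\E\, w(A_\alpha+e^{-ba}S)$ while the right-hand preference is the ordering of $\E\, w(S)$. For $w=c_1e^{c_2x}+c_3$ with $c_2\ne0$ these are, respectively, the orderings induced by $\E\,e^{c_2e^{-ba}S}$ and $\E\,e^{c_2S}$, i.e.\ exponential certainty equivalents whose risk-tolerance parameters differ by the factor $e^{-ba}$; exhibiting a small two-point lottery versus a sure outcome ranked oppositely under the two parameters (the $O(d^2)$ risk correction distinguishes them) shows these orderings differ whenever $b\ne0$, so stationarity forces $e^{-ba}=1$ for every $a$, hence $b=0$. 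In the remaining affine case the preferences are exactly the von Neumann--Morgenstern preferences with mortality of Definition~\ref{def:vonNeumannMorgernstern}, which are stationary for every $b$, so the content of the assertion is carried by the exponential family; conversely, with $b=0$ one verifies directly, using the measure-preserving $\phi$, that the Markovian forms above are stationary.
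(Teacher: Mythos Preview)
Your argument is correct and reaches the same conclusion, but by a genuinely different route from the paper. The paper works entirely with \emph{deterministic} head and tail cashflows: writing $x+\epsilon=\int_0^a e^{-bt}u(\gamma_{\alpha,t})\,\DT$, $x=\int_0^a e^{-bt}u(\gamma_{\beta,t})\,\DT$ and $y=\int_a^T e^{-bt}u(\gamma_t)\,\DT$, Markovianity becomes the pointwise functional equation $w(x+\epsilon+y)=A\,w(x+y)+B$ with $A,B$ independent of $y$; taking $y=(n-1)\epsilon$ gives a linear recursion whose solution is geometric or arithmetic, and a density argument over rationals finishes. Your route instead exploits \emph{random} tail lotteries and the uniqueness clause of the von~Neumann--Morgenstern theorem to obtain $w(B+t)=\lambda(B)w(t)+\mu(B)$, then solves the Cauchy multiplicative equation for $\lambda$ under measurability. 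Your approach makes the link to the classical CARA characterisation explicit and is arguably more conceptual; the paper's grid argument is more elementary and self-contained, avoiding both the appeal to vNM uniqueness and any regularity hypothesis on $\lambda$ (it needs only monotonicity of $w$). Your treatment of stationarity is also slightly more careful than the paper's: you correctly observe that the affine branch is stationary for every $b$, so the ``only if $b=0$'' clause bites only in the exponential case---the paper's proof silently specialises to $w(x)=-e^{-x}$ at that step without flagging the affine exception.
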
 
\begin{proof}
	The function $w$ in Kihlstrom--Mirman preferences is determined by the
	preferences up to positive affine transformation. Hence	
	the preferences will be Markovian if and only if for any
	admissible $\gamma_{\alpha,t}$ and 
	$\gamma_{\beta,t}$ defined on $[0,a)$ and
	$\gamma_t$ defined on $[a, \infty)$ we can find $A>0$
	and $B$ such that
	\begin{multline}
	\E\left( w\left(\int_0^a e^{-bt} u(\gamma_{\alpha,t}) \DT + \int_a^\infty
	e^{-bt} u(\gamma_t) \, \DT \right) \right) \\
	= A
	\, \E\left( w\left(\int_0^a e^{-bt} \gamma_{\beta,t} \DT + \int_a^\infty
	e^{-bt} u(\gamma_t) \, \DT \right)\right) + B.
	\label{eq:affineRelation}
	\end{multline}
	The ``if'' statement for Markovian preferences is now clear.	
	
	To prove the ``only if'' part of the same statement, we may assume without loss of generality
	that $0 \in U$ since a shift in the definition of $u$ can be accommodated by
	the choice of constants. Let us choose deterministic $\gamma_{\alpha,t}$, $\gamma_{\beta,t}$
	and $\gamma_t$ and introduce variables $x$, $\epsilon$ and $y$
	\begin{align*}
	x+\epsilon &:= \int_0^a e^{-bt} u(\gamma_{\alpha,t}) \, \DT, \\
	x &:= \int_0^a e^{-bt} u(\gamma_{\beta,t}) \, \DT, \\
	y &:= \int_a^T e^{-bt} u(\gamma_{t}) \, \DT.	
	\end{align*}
	We may then rewrite equation \eqref{eq:affineRelation} as
	\[
	w( x + \epsilon + y ) = A w( x + y ) + B.
	\]
	The constants $A$ and $B$ may depend upon $x$ and $\epsilon$, but they are independent of $y$. 
	Taking $x$ as a fixed point in $U$,
	our assumption on $u$ allows us to choose $\epsilon$ to be arbitrarily small and to choose
	$y$ arbitrarily in some interval $I$ around $0$. Taking $y=(n-1)\epsilon$ we find
	\[
	w(x + n \epsilon ) = A w( x + (n-1)\epsilon) + B.
	\]
	Hence for $n$ such that $n \epsilon \in I$, we have
	\begin{align*}
	w(x + n \epsilon) = \begin{cases}
	A^n w(x) + \frac{1-A^n}{1-A} w(x) B & A \neq 1 \\
	w(x) + n B & A = 1.
	\end{cases}
	\end{align*}
	This gives the result on the grid of points in $I$ starting at $x$ separated by a distance $\epsilon$.
	The result for points of the form $x+q \in I$ for rational $\Q$ follows by refining the grid and the
	general case of $x \in U$ is now clear.
	
	We now specialise to the case where $w(x)=-\exp(-x)$.
	\[
	\E\left( w\left(\int_0^a e^{-bt} \gamma_{\alpha,t} \DT + \int_a^\infty
	e^{-bt} \gamma_t \, \DT \right) \right)
	= A
	\E\left( w \left( \int_0^\infty
	e^{-b(t-a)} \gamma_{t-a} \, \DT \right) \right).
	\]
	The preferences will be stationary if and only if this is equal
	to some affine transformation applied to
	\[
	\E\left( w \left( \int_0^\infty
	e^{-b t} \gamma_{t-a} \, \DT \right) \right).
	\]
	The result for stationary preferences follows.
\end{proof}	

Let us summarize the properties of exponential Kihlstrom--Mirman preferences with mortality.
\begin{lemma}	
Exponential Kihlstrom--Mirman preferences with mortality are the only continuous time preferences with mortality which
\begin{enumerate}
\item  satisfy Axioms \ref{axiom:orderIndifference} to \ref{axiom:vnmSatisfaction}
\item  are strictly risk-averse,
\item  and are Markovian.
\end{enumerate}
If $u$ is monotone increasing and concave, then exponential preferences are also monotonic, invariant, law-invariant, stationary and
convex.
\end{lemma}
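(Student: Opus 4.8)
The plan is to prove the characterization in two directions. For the ``only if'' direction: suppose $\preceq$ is a continuous-time preference relation with mortality satisfying Axioms \ref{axiom:orderIndifference}--\ref{axiom:vnmSatisfaction}, being strictly risk-averse, and being Markovian. By Axioms \ref{axiom:orderIndifference}--\ref{axiom:vnmSatisfaction} and the discussion following Definition \ref{def:kihlstromMirman}, $\preceq$ must be a Kihlstrom--Mirman preference relation with mortality for some triple $(u,w,b)$, and in fact the argument there forces $b=0$. Strict risk-aversion means $w$ is strictly concave. Now apply Lemma \ref{lemma:exponentialKM}: since $\preceq$ is Markovian, $w$ must (up to positive affine transformation, on the relevant interval $U$) be either affine or of the form $c_1\exp(c_2 x)+c_3$. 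Strict concavity rules out the affine case and forces $c_1 c_2^2<0$ in the exponential case, so after a positive affine normalization and rescaling of $u$ (which is only determined up to scale anyway), we may take $w(x)=-\exp(-x)$. Together with $b=0$ this is exactly the definition of exponential Kihlstrom--Mirman preferences, establishing uniqueness.

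For the ``if'' direction one must check that exponential Kihlstrom--Mirman preferences do satisfy the three listed conditions: they satisfy the Axioms by construction (they are the $w(x)=-e^{-x}$, $b=0$ special case of Kihlstrom--Mirman preferences, which were derived precisely from those Axioms), they are strictly risk-averse because $w(x)=-e^{-x}$ is strictly concave, and they are Markovian by the ``if'' half of Lemma \ref{lemma:exponentialKM} (the exponential form is exactly one of the two admissible forms there). The remaining claims — that when $u$ is monotone increasing and concave the preferences are additionally monotonic, invariant, law-invariant, stationary and convex — are handled one at a time. Monotonicity, invariance and law-invariance follow from inspection of the defining functional $\E(w(\int_0^\tau u(\gamma_t)\,\DT))$: it depends on $(\gamma,\tau)$ only through the joint law, is invariant under automorphisms of the filtered probability space since these preserve joint distributions, and is increasing in each $\gamma_t$ because $u$ and $w$ are increasing. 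Stationarity follows from the stationarity half of Lemma \ref{lemma:exponentialKM} once we note $b=0$. Convexity is the one substantive point: one shows $\tilde\gamma\mapsto \E(w(\int_0^\tau u(\tilde\gamma_t)\,\DT))=-\E(\exp(-\int_0^\tau u(\tilde\gamma_t)\,\DT))$ is a concave functional of $\tilde\gamma$, using that $u$ is concave (so $\int_0^\tau u(\tilde\gamma_t)\,\DT$ is concave in $\tilde\gamma$), that $x\mapsto -e^{-x}$ is concave and increasing (so it preserves concavity under composition), and that expectation preserves concavity; hence every superlevel set $\{\tilde\gamma:(\gamma,\tau)\preceq(\tilde\gamma,\tau)\}$ is convex.

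The main obstacle, and the place where care is genuinely needed, is pinning down $w$ in the ``only if'' direction: one must be sure that Lemma \ref{lemma:exponentialKM} really applies — i.e.\ that the hypotheses of strict risk-aversion and the Axioms do not implicitly restrict $U$ or the domain of $u$ in a way that would let some other pathological $w$ slip through — and that the ambiguities in $(u,w)$ (affine freedom in $w$, scaling freedom in $u$) are exactly enough, and no more than enough, to normalize to $w(x)=-e^{-x}$ without loss of generality. Everything else is essentially bookkeeping: reading off which properties are manifest from the closed-form functional, and the short convexity argument chaining together concavity of $\int u(\cdot)$, concavity and monotonicity of $-e^{-x}$, and linearity of $\E$.
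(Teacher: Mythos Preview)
Your proposal is correct and matches the paper's intended argument: the paper presents this lemma as a summary without explicit proof, relying on exactly the chain you describe --- Axioms \ref{axiom:orderIndifference}--\ref{axiom:vnmSatisfaction} force Kihlstrom--Mirman form with $b=0$, Lemma \ref{lemma:exponentialKM} then pins down $w$ as affine or exponential, strict risk-aversion eliminates the affine branch, and the affine/scaling freedoms in $(w,u)$ normalize to $w(x)=-e^{-x}$. Your verification of the additional properties (monotonicity, invariance, law-invariance, stationarity via Lemma \ref{lemma:exponentialKM}, and convexity via the composition-of-concave argument) is exactly the bookkeeping the paper leaves to the reader.
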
	

\subsection{Epstein--Zin preferences}
\label{sec:epsteinZinTheory}

Kihlstrom--Mirman preferences are not the most popular choice to model preferences in economic literature. Many
authors prefer to consider {\em Epstein--Zin preferences}, which we will describe in this section.

The theoretical observation that makes Epstein--Zin preferences more popular than Kihlstrom--Mirman preferences
is that
Lemma \ref{lemma:exponentialKM} shows Kihlstrom--Mirman preferences are not stationary if one incorporates
discounting. For example, in \cite{epsteinZin1} it is remarked:
\begin{quote}
	Finally, note that if indifference to timing [of information arrival] and the intertemporal consistency of preferences are both assumed, then (\cite{chewEpstein}) an expected utility ordering is implied. 	
\end{quote}
Although they do not emphasize discounting here, elsewhere in their paper, Epstein and Zin do make their
implicit assumption on $b$ clear. In our terminology, \cite{chewEpstein} show that discounting, law-invariance and stationarity are incompatible.

Our focus in this paper is on normative pension investment, but Epstein and Zin's focus is wider. For example,
they remark that when choosing a preference model,
ultimately one should ``let the data speak'' suggesting their priorities are descriptive.

Since discounting is crucial in many economic models, this motivated Epstein and Zin
to propose dropping law-invariance and so consider models where the time at which information is received is important.
\cite{epsteinZin1} describes a general theory of such stationary preferences,
extending the work of \cite{krepsPorteus} to the infinite time setting.
In this regard, their work extends the homogeneous preferences proposed by
\cite{chew} and \cite{dekel}. 
As we shall see, the resulting theory separates
satiation and risk in a very similar manner to Kihlstrom--Mirman preferences.

Although these considerations are important, a second reason for the popularity of Epstein--Zin preferences
is that they incorporate homogeneous preferences.  In this context, Lemma \ref{lemma:exponentialKM} tells us that Kihlstrom--Mirman
preferences cannot simultaneously have the symmetries of homogeneity (corresponding to using a power function for $w$)
and stationarity. This provides a motivation for considering homogeneous Epstein--Zin preferences even if
one believes that discounting is not required for the problems we are considering.

\medskip

The general form of Epstein--Zin preferences for a sequence of positive scalar cashflows $\gamma_t$ is given by
\begin{equation}
Z_t(\gamma) = [ \gamma_t^\rho + \beta \mu_t( Z_{t+\delta t}(\gamma) )^{\rho}]^{\frac{1}{\rho}}.
\label{eq:epsteinZinStandard}
\end{equation}
where $\mu$ is a certainty equivalent operator and $\rho \in (-\infty,1)\setminus \{0\}$ and $0<\beta<1$. Sometimes a normalization
constant $(1-\beta)$ is included in front of the $\gamma_t^\rho$, but this is not essential.

Since the sequence of cashflows $\gamma_t$ is infinite,
the equation \eqref{eq:epsteinZinStandard} only defines the utility
as the solution of a fixed point problem. The discount factor $\beta$
plays an important role in the proof that the fixed point exists.

Given an adequacy level $a$ and a pension outcome $(\gamma,\tau)$
we define $\gamma^a_t$ to the stream of cashflows equal to $\gamma_t$ up to death and $a$ after death. We may then define the Epstein--Zin
utility with mortality to be given by the standard Epstein--Zin
utility of $\gamma^a$.

We will be primarily interested in the case where 
\[
\mu_t( Z_{t+1}(\gamma) )=\E_t( Z_{t+1}(\gamma)^\alpha )^{\frac{1}{\alpha}}
\]
where $\alpha \in (-\infty,1)\setminus \{0\}$. We refer to this case as homogeneous Epstein--Zin preferences as they have the property that for $\lambda>0$
\[
Z_t(\lambda \gamma) = \lambda Z_t(\gamma)
\]
This symmetry yields a dimension reduction of the HJB, as shown in some generality in  \cite{xing,aurandHuang}.
This allows some interesting pension problems to be solved analytically, as demonstrated in \cite{campbellViceira}. In \cite{ab-ez},
a companion paper to this article, we use
symmetry to compute the optimal investment
strategy for homogeneous Epstein--Zin preferences in the Black--Scholes model for both individual and collective investment funds when consumption occurs in discrete time. 

If we assume that all individuals will eventually die, we may give a simpler definition for homogeneous Epstein--Zin preferences with mortality which has the additional advantage of allowing the case $\beta=1$.
\begin{definition}
	\label{def:homogeneousEpsteinZin}	
	{\em Homogeneous Epstein--Zin utility with mortality} is defined
	in discrete time and depends on parameters $\alpha \in (-\infty,1) \setminus\{0\}$, $\rho \in (-\infty,1) \setminus\{0\}$,
	and $0<\beta = e^{-bt} \leq 1$.
	It is the $\R_{\geq 0} \cup \{ \infty \}$-valued stochastic process
	defined recursively by
	\begin{equation}
	Z_t(\gamma, \tau) =
	\begin{cases}
	0 & t > \tau; \alpha>0 \\
	\infty & t > \tau; \alpha<0 \\
	\left[ \gamma_t^\rho + \beta \, \E_t( Z_{t+\delta t}(\gamma, \tau)^\alpha )^\frac{\rho}{\alpha} \right]^\frac{1}{\rho} & \text{otherwise}.
	\end{cases}
	\label{eq:defepsteinzin}
	\end{equation}
	To interpret this formula we use the convention $\infty^\alpha=0$ for $\alpha<0$.
	Assuming $\gamma_0$ is deterministic, $Z_0$ is deterministic,
	so we may define homogeneous Epstein--Zin preferences with mortality
	by
	\[
	(\gamma, \tau) \preceq (\tilde{\gamma},\tilde{\tau})
	\iff
	Z_0(\gamma, \tau) \preceq Z_0(\tilde{\gamma}, \tilde{\tau}).
	\]
\end{definition}
Our definition has been chosen so that the defining equation \eqref{eq:epsteinZinStandard} also holds. Our choice of value for the utility when $t > \tau$ is determined by the requirement that
positive homogeneity still holds.

Although the defining formula \eqref{eq:epsteinZinStandard} is
elegant, Epstein--Zin preferences are a little easier to understand if one
defines the signed power function by
\[
\spow_\gamma(x) = \begin{cases}
x^\gamma & \text{ when } \gamma>0 \\
-x^\gamma & \text{ when } \gamma<0
\end{cases}
\]
and defines the {\em Epstein--Zin satisfaction}, $z_t$ by
\[
z_t = \spow_{\rho}( Z_t, \rho ).
\]
We may then write the defining equations of homogeneous Epstein--Zin preferences as follows
\begin{equation}
z_t = \gamma_t^\rho + \beta \spow_{\frac{\alpha}{\rho}}^{-1} \left( \E_t\left( \spow_{\frac{\alpha}{\rho}}( z_{t+\delta t} ) \right) \right).
\label{eq:epsteinZinRewritten}
\end{equation}
Written in this form it becomes clear that the Epstein--Zin
satisfaction is an additive quantity (as indicated by the plus sign).
For deterministic cashflows, these preferences simplify to
\begin{equation*}
z_t = \gamma_t^\rho + \beta z_{t+\delta t}
= \sum_{i=0}^\infty \beta^i \gamma_{i \delta t}^\rho. 
\end{equation*}
Hence $\rho$ is a parameter measuring satiation.
We also see from \eqref{eq:epsteinZinRewritten} that the combination of parameters $\frac{\alpha}{\rho}$ can be interpreted as  satisfaction-risk-aversion parameter. The preferences are
satisfaction-risk-averse if $\frac{\alpha}{\rho}\leq 1$, i.e.\ if $\alpha<\rho$. In the case that $\alpha=\rho$ the preferences
are satisfaction-risk-neutral and degenerate to inter-temporally additive von Neumann--Morgernstern preferences.

With this interpretation in place, we may return to the Epstein--Zin utility itself $Z_t$. We now see that this is the instantaneous cash equivalent value of $z_t$. Hence $Z_t$ can be interpreted as a cash
value and we see that the parameter $\alpha$ is a monetary-risk-aversion parameter.

We note that the choice of utility value for $\tau > t$ is forced
upon us by the requirement that our preferences are positive
homogeneous and independent of any cashflows that occur after death.
This is a limitation of homogeneous Epstein--Zin preferences with
mortality. For $\alpha<0$ we must always assume that being dead
is preferable to any cashflow, for $\alpha>0$ we must assume
that any cashflow is preferable to being dead. These are
both extreme positions to take on pension adequacy. Moreover,
it is unfortunate that this view on pension adequacy cannot
be taken independently from one's monetary-risk aversion.

Note that in the situation where $\alpha=\rho$ the pension adequacy level will not affect investment decisions.

We summarize the properties of these preferences.
\begin{lemma}
	Homogeneous Epstein--Zin preferences with mortality are monotone, convex, invariant,
	Markovian and stationary,
	but are only law-invariant if $\alpha=\rho$.
\end{lemma}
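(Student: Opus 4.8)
The plan is to verify each property by backward induction on the recursion \eqref{eq:defepsteinzin}, treating the infinite-horizon case as the limit of the finite truncations (which converge since every individual eventually dies, or by the usual $\beta$-contraction argument). Two elementary observations about the ingredients of \eqref{eq:defepsteinzin} do most of the work. For each $\rho\in(-\infty,1)\setminus\{0\}$ the aggregator $g(c,m):=(c^\rho+\beta m^\rho)^{1/\rho}$ is monotone increasing and jointly concave on $\R_{\geq 0}^2$: it is increasing because for $\rho<0$ the maps $c\mapsto c^\rho$ and $x\mapsto x^{1/\rho}$ each reverse order, and concave by the (reverse) Minkowski inequality together with positive homogeneity. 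For each $\alpha\in(-\infty,1)\setminus\{0\}$ the certainty equivalent $\mu_t(X):=\E_t(X^\alpha)^{1/\alpha}$ is, similarly, monotone increasing, positively homogeneous and concave, and it is moreover equivariant under measure-preserving automorphisms of the filtration, i.e.\ $\E_t(X\circ\phi)=\E_t(X)\circ\phi$ whenever $\phi$ preserves $\P$ and each ${\cal F}_t$.

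Monotonicity, convexity and invariance then drop out. If $\gamma_t\leq\gamma_t'$ for all $t$, inducting downward from the $\gamma$-independent boundary value at $t>\tau$ and using monotonicity of $\mu_t$ and of $g$ gives $Z_t(\gamma,\tau)\leq Z_t(\gamma',\tau)$ almost surely, whence $(\gamma,\tau)\preceq(\gamma',\tau)$. With $\tau$ fixed, the same induction shows $\gamma\mapsto Z_t(\gamma,\tau)$ is concave --- the step from $t+\delta t$ to $t$ composes the monotone concave $\mu_t$ with a concave map, then the monotone jointly concave $g$ with the result --- so each set $\{\tilde\gamma:Z_0(\gamma,\tau)\leq Z_0(\tilde\gamma,\tau)\}$ is a convex superlevel set. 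For an automorphism $\phi$ of $(\Omega,{\cal F}_t,\P)$, a third induction using equivariance of $\E_t$ and the fact that $g$ acts pathwise gives $Z_t(\gamma\circ\phi,\tau\circ\phi)=Z_t(\gamma,\tau)\circ\phi$; at $t=0$ the left side is deterministic, so $Z_0(\gamma\circ\phi,\tau\circ\phi)=Z_0(\gamma,\tau)$ and $\phi$ preserves $\preceq$.

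For the Markovian property --- and with it the substantive half of stationarity --- I would argue that the continuation value is a sufficient statistic for the future. Writing $t_0$ for the first grid point with $t_0\geq a$, the cashflows of $\gamma_\alpha\oplus\gamma_i$ and of $\gamma_i$ agree on $[t_0,\infty)$, so $Z_{t_0}(\gamma_\alpha\oplus\gamma_i,\tau)=Z_{t_0}(\gamma_i,\tau)$ does not involve the prefix, and unwinding \eqref{eq:defepsteinzin} from $t_0$ to $0$ gives $Z_0(\gamma_\alpha\oplus\gamma_i,\tau)=R_{\gamma_\alpha}\bigl(Z_{t_0}(\gamma_i,\tau)\bigr)$, where $R_{\gamma_\alpha}$ acts on ${\cal F}_{t_0}$-measurable seeds by alternately applying the $\mu_s$ and the $g(\gamma_{\alpha,s}^\rho,\cdot)$ over the finitely many grid points $s<t_0$. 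The Markovian equivalence is then the statement that the order $R_{\gamma_\alpha}$ induces on continuation values is independent of $\gamma_\alpha$; I would prove this by a secondary induction on the number of grid points in $[0,a)$, checking that prepending one more cashflow composes $R_{\gamma_\alpha}$ with a strictly monotone transformation and so cannot change that induced order. Stationarity additionally uses the time-homogeneity of \eqref{eq:defepsteinzin} together with the shift isomorphism in the definition, the post-death value in \eqref{eq:defepsteinzin} having been chosen precisely so that positive homogeneity supplies the matching rescaling. These reductions are formal; the step that needs real care --- the main obstacle --- is verifying that interleaving the non-affine certainty equivalents with the prefix cashflows genuinely preserves the induced order, which draws on the structure of $g$ and $\mu$, not merely their monotonicity.

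Finally, law-invariance holds exactly when $\alpha=\rho$. If $\alpha=\rho$, raising the third line of \eqref{eq:defepsteinzin} to the power $\rho$ gives $Z_t^\rho=\gamma_t^\rho+\beta\,\E_t(Z_{t+\delta t}^\rho)$ with $Z_t^\rho=0$ for $t>\tau$, so $Z_0^\rho=\E\bigl(\sum_i\beta^i\gamma_{i\delta t}^\rho\1_{\{i\delta t\leq\tau\}}\bigr)$ depends only on the joint law of $(\gamma_t,\tau)$; these are exactly the inter-temporally additive von Neumann--Morgernstern preferences of Definition~\ref{def:vonNeumannMorgernstern}, already known to be law-invariant. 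If $\alpha\neq\rho$, I would exhibit a counterexample: over grid points $0,\delta t,2\delta t$ with deterministic death at $2\delta t$ and deterministic $\gamma_0,\gamma_{\delta t}>0$, let $\gamma_{2\delta t}$ take two distinct values by a fair coin, resolved at time $\delta t$ in one outcome and at time $2\delta t$ in the other. The two outcomes share the same joint law of $(\gamma_t,\tau)$, but $Z_0$ equals $g\bigl(\gamma_0,\mu_0(g(\gamma_{\delta t},\gamma_{2\delta t}))\bigr)$ in the first case and $g\bigl(\gamma_0,g(\gamma_{\delta t},\mu(\gamma_{2\delta t}))\bigr)$ in the second; raising to the power $\rho$ turns the comparison of these into a strict (reverse) Minkowski inequality whenever $\gamma_{\delta t}>0$ and $\alpha/\rho\neq1$, so the two utilities differ. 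This is the Kreps--Porteus sensitivity to the timing of the resolution of uncertainty noted in the surrounding discussion.
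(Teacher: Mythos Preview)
The paper states this lemma without proof, as a summary of properties, so there is no argument of the paper's to compare against; I assess your argument on its own.

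Your treatment of monotonicity, convexity, invariance and of the law-invariance dichotomy is correct and cleanly organised: the backward induction using the joint concavity of the CES aggregator $g$ and the concavity of the power certainty equivalent $\mu_t$ is exactly the right mechanism, and your timing-of-resolution counterexample for $\alpha\neq\rho$ is the standard Kreps--Porteus one, correctly reduced to the failure of $\E[(A+B)^{\alpha/\rho}]^{\rho/\alpha}=A+\E[B^{\alpha/\rho}]^{\rho/\alpha}$ for constant $A>0$ when $\alpha\neq\rho$.

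The genuine gap is in the Markovian (and hence stationarity) argument, precisely at the point you yourself flag as ``the main obstacle.'' You write that prepending one more prefix cashflow ``composes $R_{\gamma_\alpha}$ with a strictly monotone transformation and so cannot change that induced order,'' but $R_{\gamma_\alpha}$ is a functional from \emph{random} continuation values $V\in L^0({\cal F}_{t_0})$ to scalars, not a map from $\R$ to $\R$, so ``strictly monotone'' is not the relevant notion. Concretely, one backward step through a deterministic prefix cashflow $c_s$ sends the running ${\cal F}_s$-measurable seed $W$ to $\E_{s-\delta t}\bigl[(c_s^\rho+\beta W)^{\alpha/\rho}\bigr]$; this is an expected-utility ordering on $W$ with Bernoulli index $w\mapsto(c_s^\rho+\beta w)^{\alpha/\rho}$. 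For $\alpha\neq\rho$ these indices are \emph{not} positive affine transforms of one another as $c_s$ varies --- for small $c_s$ the induced order is by $\E[W^{\alpha/\rho}]$, while for large $c_s$ a first-order expansion shows it is by $\E[W]$ --- and one can choose $W_1,W_2$ whose ranking flips between the two. Thus the secondary induction you sketch does not go through on the literal reading of the paper's Markovian definition. Either that definition is intended in a weaker sense (for instance with the continuation independent of ${\cal F}_a$, in the spirit of the product decomposition appearing in the stationarity definition), or the claim requires qualification; in any case the structure of $g$ and $\mu$ you allude to does not supply the missing step, and your proposal leaves this part open.
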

The key advantages of these preferences are analytic tractability and the potential
to include discounting.

We have only described Epstein--Zin preferences in discrete time, but one may also formulate
a continuous time theory \cite{duffieEpstein, schroderSkiadas, xing, aurandHuang}. However, this theory is
considerably more complex than the continuous time theory for exponential Kihlstrom--Mirman preferences.

\medskip

It is
instructive to note that discrete time exponential Kihlstrom--Mirman preferences satisfy a similar equation to \eqref{eq:epsteinZinRewritten}.
If we define
\[
\tilde{z}_t:=-\log\left(\E\left(\exp\left(\int_0^\tau u(\gamma_t) \DT \right)\right)\right).
\]
then one easily checks that
\begin{equation}
\tilde{z}_t = u(\gamma_t \, \delta t) + v^{-1} \left( \E_t\left( v( \tilde{z}_{t+\delta t} ) \right) \right)
\label{eq:genRecursive}
\end{equation}
where $v(x)=-\exp(-x)$. Thus these preferences also fit into the recursive preferences
framework of \cite{krepsPorteus}, but are inhomogeneous.

To define Epstein--Zin type preferences with mortality that allow a flexible specification
of the adequacy level there are two approaches. Either one could define the utility via equation
\eqref{eq:genRecursive}, taking
$u(x)=x^\rho + \alpha$ for some $\alpha$ and $v=\SP_\frac{\alpha}{\rho}$. Alternatively
one could leave the refining recursion equation unchanged and modify the value of $a$, the
utility when dead. This latter
approach would require choosing a value of $\beta$ less than $1$. Since it is difficult to decide how to do this
for our normative investment questions, the former approach seems preferable.
Whichever approach one takes, breaking homogeneity is inevitable.

\bigskip

This completes our theoretical discussion of preference models. To make a final decision on which 
model is the most appropriate we  must wait until we can examine
our numerical results. Then we will know which models yield reasonable investment/consumption
strategies.

\section{A Realistically Parameterised Model}
\label{sec:model}

We wish to compare numerically the performance of annuities, individual funds and collective funds.
To do this we must now choose precise market, mortality and preference models. We choose
market and mortality models which are as close as possible to the model used in \cite{thefuturebook},
which in turn is based on modelling assumptions of \cite{obr2019}.
We use a version of exponential Kihlstrom--Mirman preferences modified to incorporate
a deterministic state pension.

We will work in continuous time for investments, but consumption will be assumed to take
place in discrete time, with $\delta t$ taken to be 1 year.

We specialise to the case of the Black--Scholes--Merton model. That is,
we suppose that there is a risk free asset $S^1_t$ growing at a risk free rate $r$ and a  risky asset $S^2_t$ which follows geometric Brownian motion
with drift $\mu$ and volatility $\sigma$:
\begin{align}
\ed S^1_t &= S^1_t( r \, \ed t ), &S^1_0 \nonumber \\ 
\ed S^2_t &= S^2_t( \mu \, \ed t + \sigma \, \ed W_t), & \quad S^2_0.
\label{eq:blackScholesMerton}
\end{align}
We emphasize that all values are quoted in real terms. In particular $r$ is the difference between the nominal interest rate and
the rate of inflation. Similarly $\mu$ measures real returns.

There are many well-known limitations to the Black--Scholes--Merton model. We believe that the most important limitation of the model for pension modelling is the assumption of a fixed deterministic interest rate. For example, the low interest rates that have prevailed over the last decade have had a dramatic impact upon pension outcomes. Nevertheless for the purposes of this paper (estimating the potential benefits of collectivised pensions) we believe that this limitation is acceptable. We aim to extend our approach to stochastic interest rate models in future research.

\medskip

Currently in the UK, the state pension grows in real terms due to the so-called ``triple lock''.
The UK Office of Budget Responsibility uses a deterministic model of the state pension growing at a
rate of average earnings growth plus $0.36\%$, yielding a net growth rate of $4.7\%$.

We are able to incorporate this into our preference model by choosing a gain function of the form
\begin{equation}
{\cal J}(\gamma,\tau)=\E\left(-\exp\left(-s_{\gamma,\tau} \right) \right)
\end{equation}
where the satisfaction, $s_{\gamma,\tau}$ is given by 
\[
s_{\gamma,\tau}=\sum_{t \in {\cal T}, t \leq \tau} u(\gamma_t,t)\, \delta t
\]
and where $u$ is given by
\begin{equation}
u(\gamma,t)=\begin{cases}
a(\gamma_t+\SP_t)^\rho - a (\AL_t+\SP_t)^\rho & \gamma_t\geq 0 \quad \forall t, \\
-\infty & \text{otherwise}.
\end{cases}
\label{eq:uParameterised}
\end{equation}
The parameter $\SP_t$ is a deterministic state pension at time $t$,
and $\AL_t$ is the adequacy level for the private pension. Thus if the individual consumes at a rate $\gamma_t=\AL_t$
at all times, their overall satisfaction will be $0$.
The parameter $\rho<1$ is a satiation parameter and $a$ is a satisfaction-risk-aversion parameter. If $\rho<0$, $a<0$ otherwise $a>0$.
Our gain function ensures that consumption must always be non-negative.

We note that incorporating the state pension into the model will inevitably break any homogeneity properties of the problem and, if the state pension is time varying, this will also break any translation
invariance properties. This is why we have selected to use 
a time varying-version of exponential Kihlstrom--Mirman preferences in this model.

\medskip

The numerical value of the parameter $a$ in our gain function will depend upon the units of currency. To remedy this
we first define $X_{\AL}$ to be the currency value required at time $0$ in order to fund a deterministic pension of $\overline{AL}_t:=\max\{\AL_t,0\}$
\[
X_{\AL} = \int_0^\tau e^{-rt} \overline{AL}_t \, \DT.
\]
We now define a parameter $\lambda$ by
\[
\lambda = \lim_{\epsilon\to 0} \frac{ s_{(1+\epsilon)\overline{AL}_t,t} - s_{\overline{AL}_t,t}}{\epsilon}.
\]
The parameter $\lambda$ therefore measures the rate of increase in satisfaction as one proportionately increases a deterministic pension set at the adequacy level. It provides a dimensionless parameter proportional to $a$.

We must choose all the parameters in our model in order to perform the comparison. 

Table \ref{table:parameterSummary} contains a summary of all our parameter assumptions.

\begin{sidewaystable}[htbp!]
	\centering	
	\begin{tabular}{L{0.1\linewidth}L{0.25\linewidth}L{0.25\linewidth}L{0.25\linewidth}} \toprule
		Parameter & Value & Description & Justification \\ \midrule
		$\SP_t$ & $\pounds 6718\, \exp(r_{\TL}t)$ & Annual state pension. & $\pounds 129.20$ per week as of July 2019. Growth as in OBR 2019. \\
		$\AL_t$ & $\pounds 16800-\SP_t$. & Adequacy level. & \cite{pensionCommission2004}
		and \citep{hmrc2019} \\
		$r_{\CPI}$ & $0.02$ & CPI growth. & OBR 2019. \\
		$r$ & $0.047-r_{\CPI}$ & Gilt returns. & OBR 2019. \\
		$r_{\TL}$ & $0.047-r_{\CPI}$ & State pension growth. & OBR 2019. \\
		$\mu$ & $0.082-r_{\CPI}$ & Equity growth. & Based on $4.6\%$
		index growth as in OBR 2019 together with a dividend yield of $3.6\%$
		as assumed in \cite{thefuturebook}.\\
		$\sigma$ & $0.15$ & & \cite{bloomberg} \\
		$X_0$ & $X_\AL=\pounds 126636$ & Initial fund value. & \\
		$\rho$ & $-1$ & Satiation parameter. & \cite{havranek2015}. \\ 
		$\lambda$ & $1$ & Satisfaction-risk-aversion parameter. & Illustrative choice.  \\ 
		$p_t$ & CMI\_2018\_F [1.5\%] & Mortality distribution & CMI 2018 \\ \bottomrule			
	\end{tabular}
	\caption{Parameter summary}
	\label{table:parameterSummary}
\end{sidewaystable}

The market parameters are mostly calibrated using the assumptions of \cite{obr2019}.
For equity returns we used the assumptions of the report
\cite{thefuturebook} which were designed to be compatible with those of the OBR.
To estimate equity volatility, $\sigma$, we used data for the FTSE
All Share Total Return Index from December 1985 to June 2019 obtained from \citep{bloomberg}.

The mortality distribution $p_t$ was obtained using the model CMI 2018 described in \citep{cmi2018}. We used this model to find the mortality distribution for women of UK retirement aged 65 in 2019 (65 being the UK retirement
age as of 2019). The CMI model requires one to choose a parameter determining the long-term rate of mortality improvement. We chose a long-term rate of $1.5\%$, the same value used in the illustrative examples of \cite{cmi2018}. To avoid numerical problems caused by low probability events, 
the age distribution was cut off at the point where the probability of surviving to this age was only $10^{-5}$. The resulting distribution is shown in Figure
\ref{fig:cmi2018f}.

\begin{figure}[htb]
	\centering
	\includegraphics{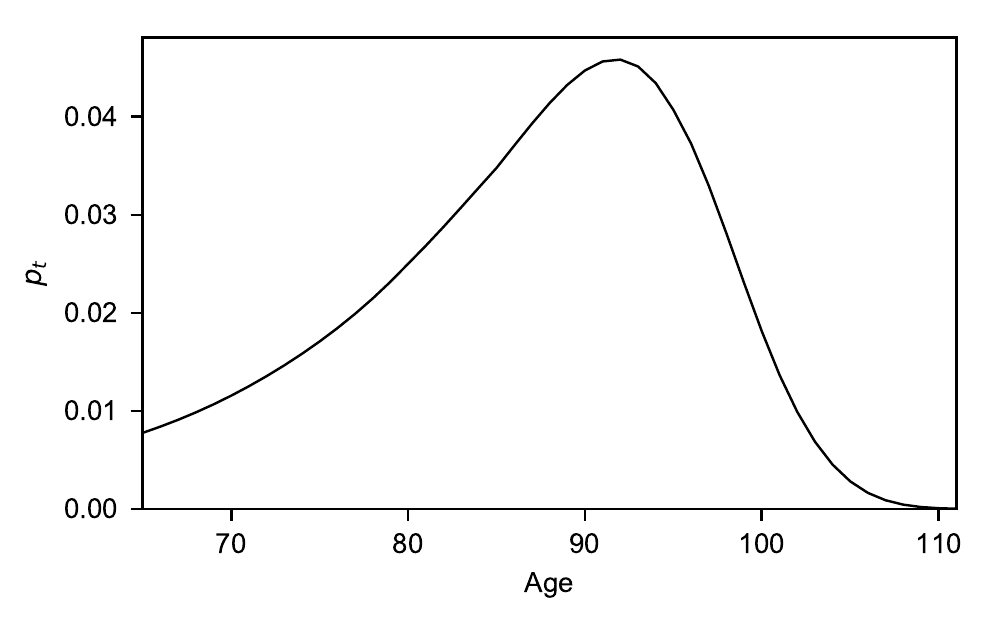}
	\caption{Probability density $p_t$ for the random variable $\tau$. Data for UK women aged 65 in 2019 using the model CMI\_2018\_F [1.5\%] (see \cite{cmi2018}).}
		\label{fig:cmi2018f}
	\end{figure}

The parameters determining the utility function are subjective and will vary from individual to individual, as will the available budget. Thus we can only choose illustrative values for these
parameters. We will now briefly explain how the values for these parameters were selected.

\begin{enumerate}
\item Choice of $\rho$. We know that in the case of von-Neumann Morgernstern preferences with utility function $u(x)$, the value of $\rho$ is closely related to the elasticity of inter-temporal substitution. Since von-Neumann Morgernstern preferences are a limiting case of exponential Kihlstrom--Mirman preferences, this suggests
we calibrate $\rho$ from empirically observed inter-temporal  substitution.
The mean elasticity observed in the meta-analysis \cite{havranek2015} is $0.5$. 
We compute the elasticity of inter-temporal substitution in the case of homogeneous Epstein--Zin
preferences for our market model in \cite{ab-ez}. Together these results  suggests we choose $\rho$ such that $\frac{1}{1-\rho}=0.5$, hence we take
$\rho = -1$. We emphasize that the value of $\rho$ will likely vary between individuals. The standard methodology
used to estimate the elasticity of inter-temporal substitution assumes a simple market model and is compatible with our
choice of the Black--Scholes model. For more sophisticated models with time varying market price of risk \cite{bansalYaron}
suggest the the elasticity of inter-temporal substitution may be closed to $1.5$ leading to the choice $\rho=\frac{1}{3}$. By choosing the smaller value $\rho=-1$ for our calculations we are erring on the
side of underestimating the benefits of a collectivised scheme over an annuity.
\item Choice of $X_0$. We choose the initial budget $X_0$ to equal $X_{\AL}$. Thus our illustrative individual can just afford a deterministic pension at the adequacy level.
\item Choice of $\AL_t$. We are referring to the parameter $\AL_t$ as the adequacy level because 
it is the obvious generalization of the notion of adequacy level given in Section \ref{sec:prefs}
to the form of gain function we are using in this section. However, the
term ``adequacy'' has already been used in the pension literature and we will insert quotation marks
around the word ``adequacy'' when the term should be understood in this broad sense. 

Various definitions for ``adequacy'' have been proposed. For example, one may choose an ``adequacy'' level based
on absolute poverty worldwide, relative poverty within one's country or relative to one's own lifetime earnings. See \cite{redwood2013} for a fuller discussion. There is no a priori reason why our formal notion of adequacy should correspond
to any particular notion of ``adequacy''. Indeed most notions of ``adequacy'' depend only on the age, nationality and income of
the individual whereas our notion of adequacy depends on preferences and so is likely to vary between individuals of identical age,
income and nationality. Nevertheless, we will choose one specific model of ``adequacy'' to determine $\AL$: specifically we will use the target replacement rates given in \cite{pensionCommission2004} to
determine the ``adequacy'' level as a proportion of final earnings.

The usual notion of ``adequacy'' refers to the required total pension. Since we have modelled the state pension by making a horizontal shift of our utility function, our notion of adequacy is correspondingly reduced by the state pension.

With this understood, we assume that our individual is earning $\pounds 24,100$ per annum, which is the median income before tax for individuals age 60--64 in 2016--17 in the UK
\citep{hmrc2019}. Then following the Pension Commission's suggested target replacement rates (\cite{pensionCommission2004}, updated to 2017 terms) we choose a target replacement rate of $70\%$. This gives an ``adequacy'' level of $\pounds 16,800$ per annum for the total
income from private and state pension.
\item Choice of $\lambda$. We take $\lambda=1$ as an illustrative example. To decide on a reasonable value for $\lambda$, one can look at the resulting range in the level of consumption when one simulates the investment strategy. We will plot a fan-diagram of the consumption in the next section (Figure \ref{fig:obrAssumptionsFan}) and it
can be seen from this diagram that $\lambda=1$ gives a reasonable result. In practice one might try to calibrate $\lambda$ for an individual using a risk questionnaire, but we will not attempt to consider
how such a questionnaire could be designed. 
\end{enumerate}

\subsection{Numerical comparison of annuities, individual investment and collectivised funds}
\label{sec:comparison}

Using the model with parameters as described in Section \ref{sec:model}
we are able to compute the optimal consumption for an individual fund $(n=1)$, a collective fund
$(n=\infty)$ and to compare this with an annuity. The problem may be written formally as
an optimal control problem, and this is done in \cite{ab-exponential} and, moreover,
that paper describes a numerical method to solve the problem. The resulting pattern of consumption is
shown in Figure \ref{fig:obrAssumptionsFan}.

\begin{figure}[htb]
	\centering
	\includegraphics{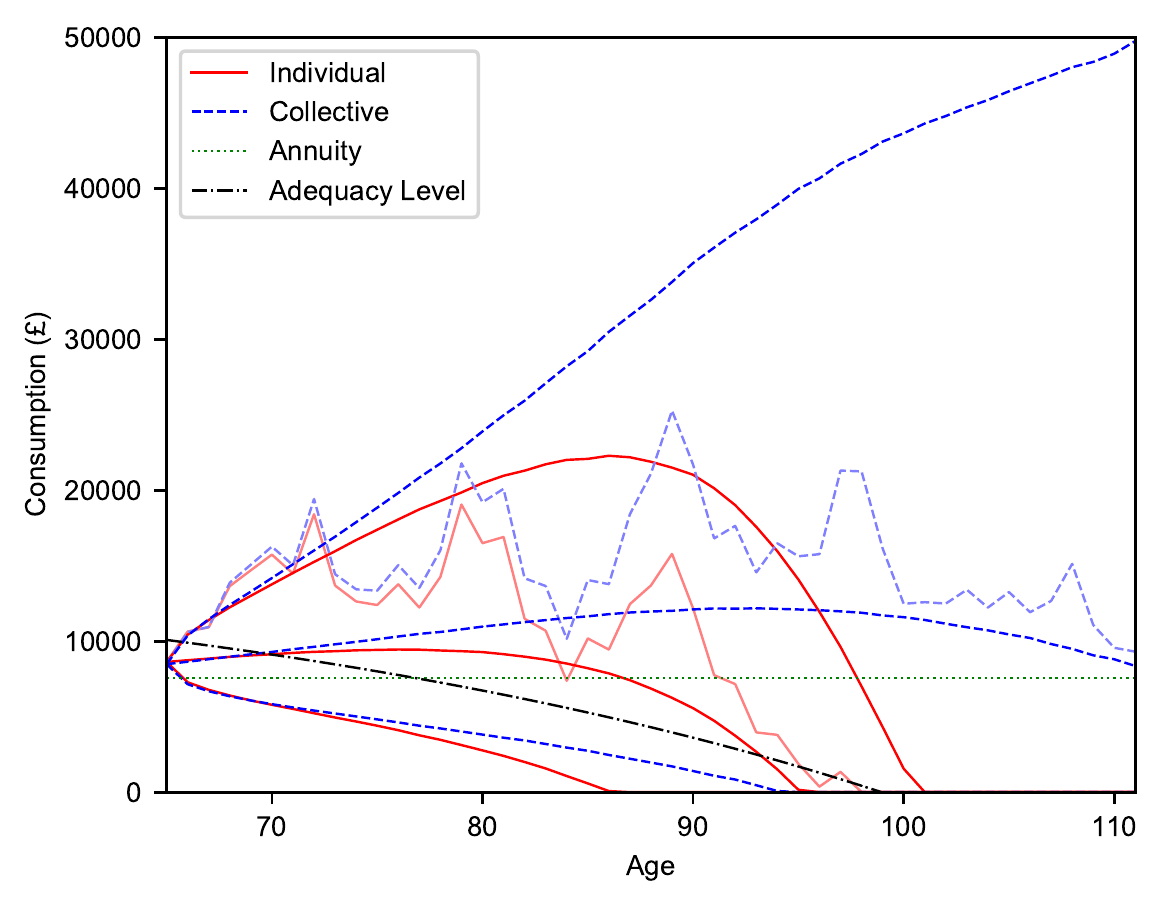}
	\caption{Fan diagram of optimal consumption over time for different types of pension fund.
	The percentiles shown in the fan are at $(5\%,50\%,95\%)$. }
	\label{fig:obrAssumptionsFan}
\end{figure}

The line illustrating the consumption of an annuity is straightforward to understand. It is a horizontal line as the consumption from an annuity is constant until death. Similarly the line representing pension adequacy is simple to understand. It starts at the current adequacy level but decreases over time deterministically due to the assumed increase in the state pension.

The optimal consumption of the individual fund, however, is not deterministic. To illustrate this consumption we have plotted a {\em fan diagram}. This fan diagram is made up of the four different lines in 
the figure all drawn with a continuous line style (and coloured red in colour reproductions).
Three of these lines represent the $5$th, $50$th and $95$th percentiles of the consumption at each point in time: these three lines are nearly smooth. The jagged line represents one illustrative random scenario. 

The optimal consumption for the collective fund is given by a similar fan diagram. The same stock price path was used to generate the illustrative random scenario for the collective and individual cases.

Since the gain of an individual depends on both consumption and mortality, one should cross-reference 
the diagram of consumption with Figure \ref{fig:cmi2018f} which shows the corresponding mortality distribution.

The diagram was obtained by computing a numerical approximation to the optimal investment
strategies using the method \cite{ab-exponential}. The percentiles were then estimated
by performing $10^5$ independent stock price simulations, applying the strategy and then computing the sample percentiles.

In Table \ref{table:annuityOutperformance} we present the {\em annuity equivalent} value of each investment-consumption approach. We define this to be the price of an annuity which would
give the same gain. This is a monetary measure of how much a strategy outperforms an annuity (or underperforms). We also present the {\em annuity outperformance}. This is the defined by
\[
\text{annuity outperformance} := \frac{\text{annuity equivalent}}{\text{budget}}-1.
\]
This gives a measure of the performance of the strategy relative to an annuity of
the same cost.

\begin{table}[h!tbp]
	\begin{center}
		\begin{tabular}{lrr}
			\toprule
			Fund & Annuity Equivalent ($\pounds\times 10^3$)& Annuity Outperformance \\ \midrule
			Annuity & 126.6 & 0\% \\
			Individual & 128.7 & + 1.5\% \\
			Collective & 152.2 & + 20\% \\
			\bottomrule 
		\end{tabular}	
		\caption{Numerical comparison of the three investment-consumption strategies
		with parameters as given in Section \ref{sec:model}}	
		\label{table:annuityOutperformance}
	\end{center}
\end{table}	

Our conclusion is that, for this illustrative example, collectivised pension investment
substantially outperforms an annuity. Although in this particular example, an individual
fund outperforms an annuity, a change to the parameters (for example taking $\mu=0.75$)
may yield a situation where the annuity outperforms the individual fund. By contrast
the optimum collective pension investment is guaranteed to outperform an annuity.

\subsection{The impact of satiation-risk-aversion and the adequacy level}

To understand how satiation-risk-aversion affects the optimal investment strategy in the
collectivised case, it is instructive to set the parameter $\lambda$ to a high value and to
increase the initial budget.
Figure \ref{fig:highLambda} illustrates the optimal consumption pattern if we set $\lambda=50$
and $X_0=2 X_{\AL}$, but otherwise use the same parameters as described in Section \ref{sec:model}.

\begin{figure}[htb]
	\centering
	\includegraphics{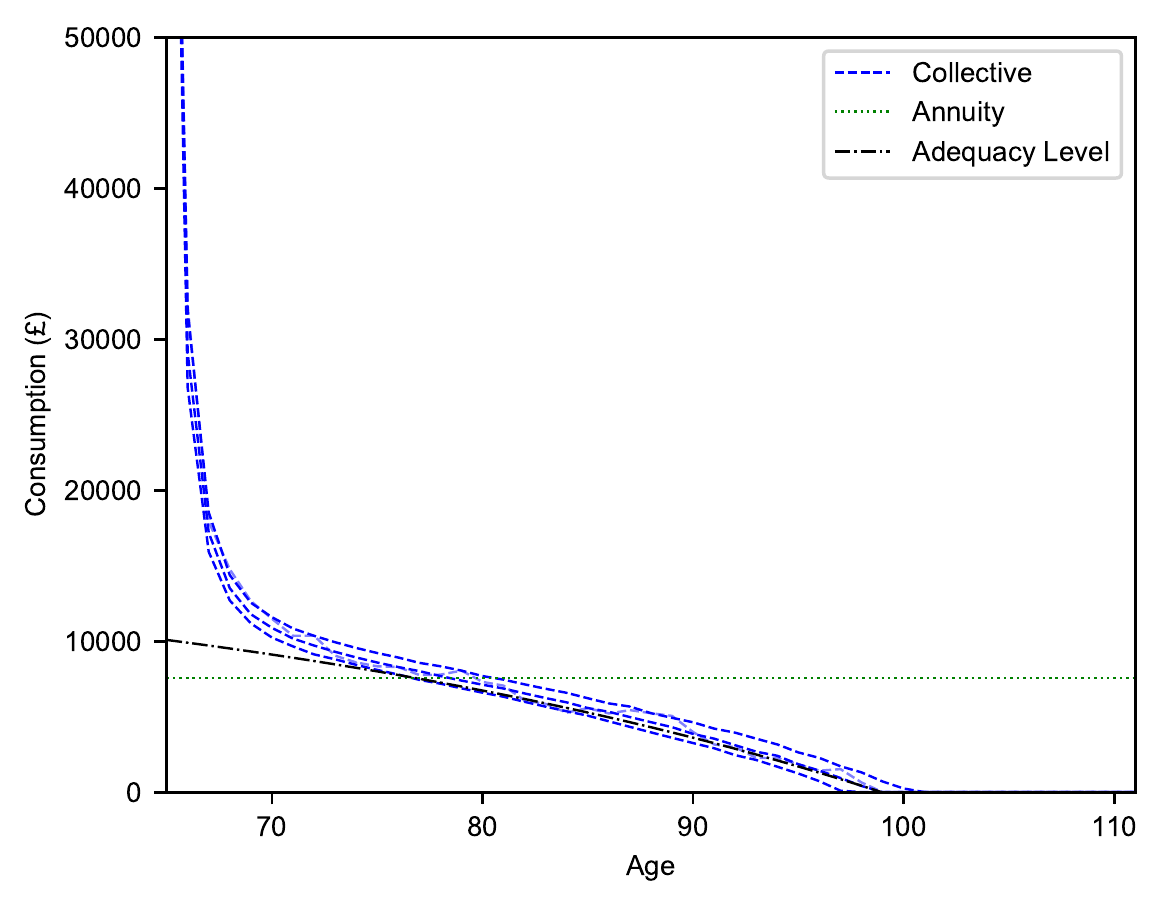}
	\caption{Fan diagram of the optimal consumption for an infinite collective for high satisfaction-risk-aversion $\lambda$. The percentiles shown in the fan are at $(5\%,50\%,95\%)$.}
	\label{fig:highLambda}
\end{figure}

We see that the initial consumption is very high, but then reduces to closely track
the adequacy level. To interpret this result, note that the only way it is possible to achieve
a deterministic satisfaction is to: start with a budget of at least the $\AL$; consume at the adequacy level at all times $t>0$; consume at the adequacy level plus any excess budget at time $t=0$. 
Given that this is the only strategy that yields deterministic satisfaction,
it is now unsurprising
that if the satisfaction-risk-aversion is set to a high value, the resulting consumption strategy will closely approximate this deterministic strategy.

If the initial budget is lower than $X_{\AL}$ but the risk-aversion is still high, we found in numerical examples that the behaviour was to consume at a low level until there is sufficient budget to begin tracking the adequacy level.

This behaviour is consistent with that found analytically for homogeneous
Epstein--Zin utility in \cite{ab-ez}. The behaviour in this case
is exaggerated because the adequacy level is forced to be either $0$ or $\infty$ in order to
achieve homogeneous preferences.

Our conclusion is that the adequacy level does indeed play an important role in
pension investment. We
note that a satisfaction-risk-averse individual may well decide to spend a large part of their
pension fund shortly after retirement due to the unhedgeable risk that they may die young.
We note that investing in an annuity suggests an inconsistent attitude towards risk: one is
being entirely risk-averse in investment decisions, yet one is ignoring the risk of dying young.
This may help explain why many pension investors instinctively find annuities unattractive.

\subsection{Comparison with Epstein--Zin preferences}
\label{sec:epsteinZin}

\begin{figure}[htb]
	\centering
	\includegraphics{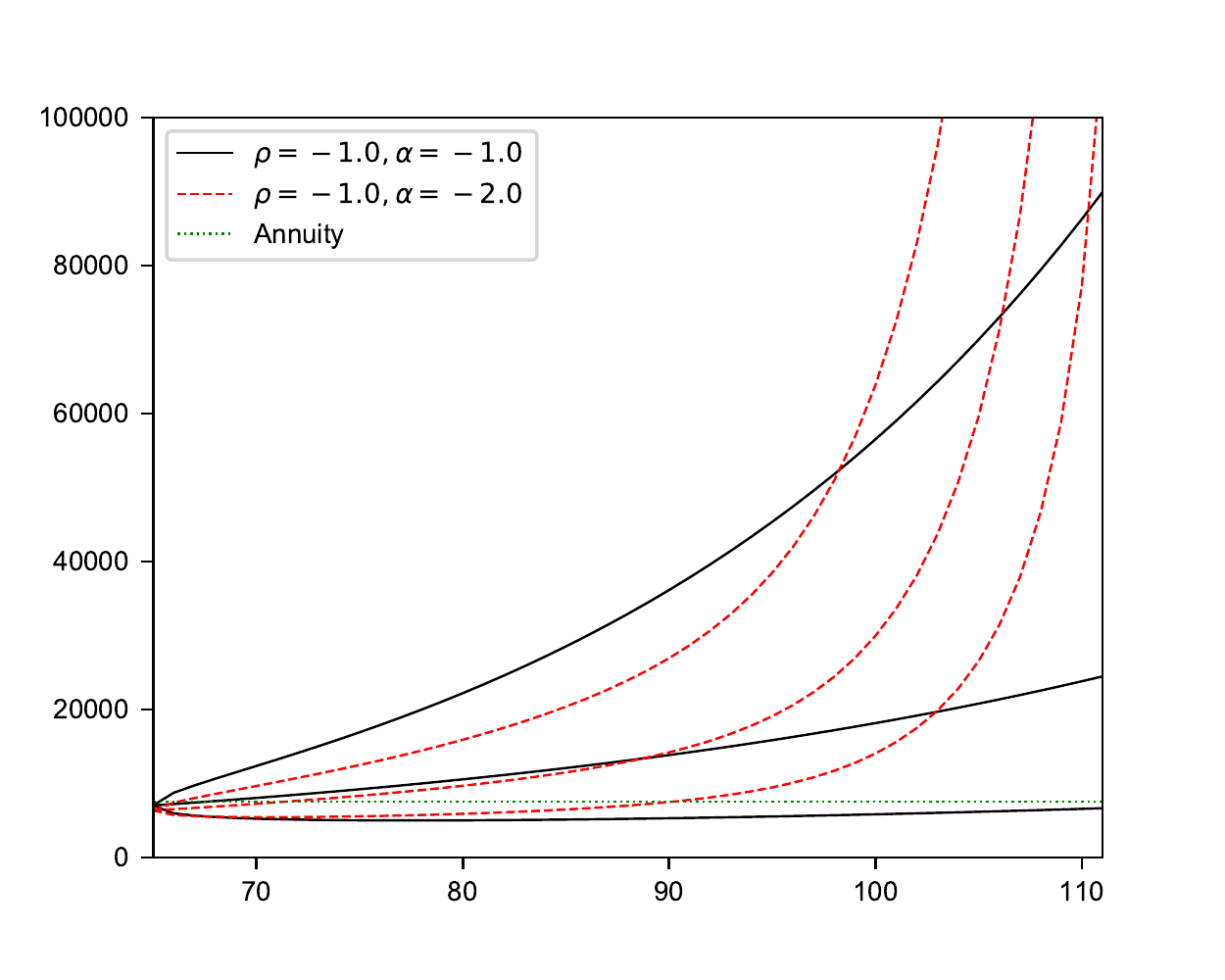}
	\caption{Fan diagram of optimal consumption for an infinite collective ($n=\infty$) with homogeneous Epstein--Zin preferences.
		We illustrate both the case of intertemporally additive von Neumannn--Morgernstern preferences ($\alpha=\rho$)
		and of satiation-risk-aversion ($\alpha<\rho$). The percentiles in the fans
		are $(5\%, 50\%, 95\%)$. }
	\label{fig:epstein-zin-with-riskaversion}
\end{figure}

To compare the results we have seen with 
those obtained under homogeneous Epstein--Zin preferences, 
we plot in Figure \ref{fig:epstein-zin-with-riskaversion} the optimal consumption calculated
using homogeneous Epstein--Zin preferences, but with all market parameters as before. The state pension
and the adequacy level $AL_t$ are no longer used in the calculation. Instead, the adequacy level
will be determined by the coefficients $\alpha$ and $\rho$ themselves and will always
take a value of $\infty$ or $0$. The analytic results of \cite{ab-ez} were
used to plot this diagram.

The marked qualitative difference between the plot for exponential Kihlstrom--Mirman preferences
and for homogeneous Epstein--Zin preferences can be explained by the tendency to track the
adequacy level as satiation-risk-aversion increase. The median consumption gradually decreases in very old age for the collective fund with exponential Kihlstrom--Mirman preferences in Figure \ref{fig:obrAssumptionsFan} because the adequacy
level in this case is zero. The median consumption increases in very old age for 
the collective fund with homogeneous Epstein--Zin preferences in Figure \ref{fig:epstein-zin-with-riskaversion}
because the adequacy level in this case is $\infty$.

We believe this emphasizes that considering the adequacy level and state pension is important
in determining the optimal investment strategy.

One can incorporate the state pension and adequacy level into Epstein--Zin preferences if
one is willing to sacrifice homogeneity and to choose a value for the parameter $\beta$
required for inhomogeneous Epstein--Zin preferences. We believe that the numerical method
of \cite{ab-exponential} could be adapted to solve the problem in this case.

Our results indicate that homogeneous Epstein--Zin preferences with non-zero satiation risk aversion are not a good model for practical investment decisions.
Nevertheless we do not intend to dismiss this preference model entirely as we
believe the analytic results of \cite{ab-ez} yield considerable insights. They yield a stylised view of the optimal strategy where the effects of adequacy are exaggerated due to the choice of an extreme value for the adequacy level.

\section{Managing a Heterogeneous Fund}
\label{sec:heterogeneousNumerics}

Our results so far have considered only the case of homogeneous funds. The
problem of managing a heterogeneous fund is studied in \cite{ab-heterogeneous}. 
That paper uses an axiomatic approach to bound the utility that an individual can achieve from a collectivised investment. The argument proceeds by observing
that in a complete market there are no mutually beneficial contracts between two investors without mortality. One may then deduce
that there are no mutually beneficial contracts between two infinitely large collectives of identical investors as these
collectives are not subject to mortality risk.
On the other hand, collectivising one's pension with similar individuals is always beneficial. These observations can be combined to show that the value of the gain function that any individual can hope to achieve when investing in a heterogeneous fund is bounded above by the value of the gain function they can achieve when investing in an infinite collective.

Given this intuition, we now propose a practical algorithm that should achieve this bound as the number of investors in the heterogeneous fund tends to infinity. We may then test numerically how the algorithm performs for
small funds. 

\begin{algorithm}[Heterogeneous fund algorithm]
	\label{algo:heterogeneous}
	Choose a positive integer $n_{\max}$. Let $n_t$ denote the number of survivors at time $t$. Define $n^\prime_t$ by
	\[
	n_t^\prime = \begin{cases}
	n_t & n_t \leq n_{\max} \\
	\infty & \text{otherwise}.
	\end{cases}
	\]
	Compute the consumption and investment strategy as follows:
	\begin{enumerate}
		\item  Keep accounts of the current funds associated with
		each individual.
		\item At time $t$, for each surviving individual $i$ in the fund, invest and consume
		according to the optimal strategy for a homogeneous fund of $n_t^\prime$
		investors identical to that individual with budget given by their current
		funds. Even if, after consumption, the individual dies at time $t$,
		one should pursue the
		same investment strategy as one would have done if they had survived.
		
		Compute the resulting wealth $\mathring{X}^i_{t+\delta t}$ of each individual
		$i$ who was alive at time $t$.
		\item For an investor $i$ who survives to time $t+\delta t$ we define their
		``contribution'' to the collective at time $t+\delta t$, $\Gamma^i_{t+\delta t}$
		by
		\[
		\Gamma^i_{t+\delta t}
		= (1-s^i_t) \mathring{X}^i_{t+\delta t}
		\]
		where $s^i_t$ is the survival property of individual $i$ from time $t$ to $t+\delta t$.
		This can be interpreted as a fair price for the derivative contract with a payoff
		equal to the wealth received by the collective if the individual dies and zero
		otherwise, so long as the pricing measure for the individual's mortality is taken
		to equal the physical measure $\P$.
		\item When an individual dies, divide their funds among the survivors
		in proportion to each survivor's contribution $\Gamma^i_{t+\delta t}$.	      
	\end{enumerate}
\end{algorithm}

The purpose of the cut-off $n_{\max}$ in this algorithm is simply that it
is computationally expensive to compute the optimal strategy for a collective of $n$ investors if $n$
is large.

The logic behind this algorithm is that we assume we can divide
our population into large groups of similar individuals.
Let one such group of individuals be close to one particular individual
which we label $\zeta$.
Since the individuals are similar we assume that the optimal
strategies for each member of the group will be similar. If the number
of individuals in the group is large, the optimal strategy for $n_t$ individuals
of a given type will be similar to the optimal strategy for $\infty$ individuals
of a given type. The number of survivors
will also be close to the expected value. Thus (2) and (3)
together will ensure that the utility achieved by individuals of type close
to $\zeta$ will be close to the utility that can be obtained by an infinite
collective of individuals all of type exactly $\zeta$.

This argument is essentially a compactness and
continuity argument. One could therefore write out a formal topological
proof of its convergence as the number of individuals tends to infinity,
under suitable assumptions. However, we do not believe doing so
would be particularly illuminating.

It is natural to ask just how large a fund is required in order that this algorithm achieves a value of the gain function that is close to this upper bound for all investors. To answer this question we generated a random fund
of $n=100$ individuals. Each individual had inter-temporally additive von Neumann--Morgernstern 
preferences given by a power utility with the power uniformly generated
in the range $[-1.5,-0.5]$. The initial wealth of each individual
was taken to be uniformly distributed in the range $[0.5,1.5]$.
The retirement age of each individual was taken to be a uniformly
distributed random integer in the range $60$--$69$. The sex
of the individual was chosen as male with probability $50\%$. All of these
random choices were made independently.
The mortality distribution for each individual
was then computed using the CMI model with longterm rate $1.5\%$
and retirement year of $2019$. The market parameters were chosen as in Section \ref{sec:epsteinZin}.

Using this population, we ran $10^6$ simulations of 
the algorithm above with $n_{\max}$ set to $50$.. This allowed us to compute a sample
average utility, $u^i_S$, for each individual $i$.

We define $u^i_n$ to be the expected utility that would be achieved by
individual $i$ if they were to invest in a homogeneous collective of $n$ 
individuals. We define the {\em optimality ratio} for individual $i$ by
\[
\text{OR}_i := \frac{u^i_S-u^i_1}{u^i_\infty-u^i_1}.
\] 
If this ratio is close to $1$, then the utility experienced by individual $i$
is close to the optimum value they can expect from any acceptable collectivised investment.

In Figure \ref{fig:heterogeneous-histogram} we plot a histogram of the optimality ratio, $\text{OR}_i$, for each of our $100$
fund members. In our example,
the optimality ratio is almost always above $98\%$. This demonstrates
that, even with as few as $100$ investors, our investment strategy is close to
optimal.

We notice that some individuals are lucky, and for them our strategy results in them receiving a slightly higher
gain than the maximum predicted in \cite{ab-heterogeneous}. This does not contradict the results of that paper
as our algorithm only approximately satisfies the formal axioms given there. In particular it is possible that with our axiom some investors might in principle be better off forming a smaller collective rather than joining with a single collective fund. Our numerical results shows that the potential gain of doing this will be marginal, and this is the sense in which the axioms of \cite{ab-heterogeneous} are approximately satisfied.

\begin{figure}
	\centering
	\includegraphics{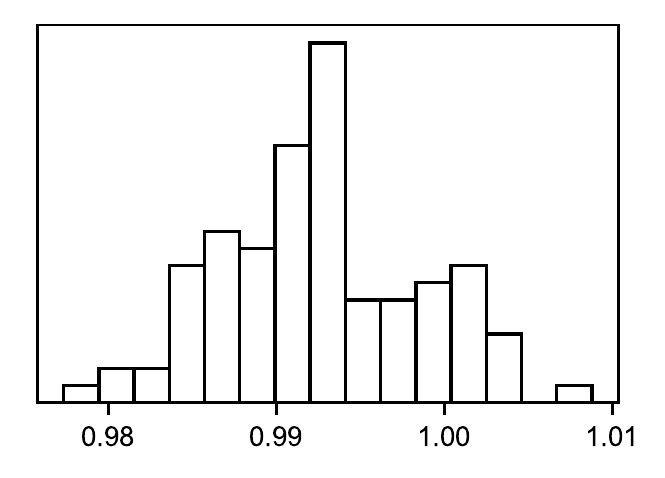}
	\caption{Histogram of the optimality ratio, ${OR}_i$, 
    obtained for a
	randomly generated fund of $100$ investors.}
	\label{fig:heterogeneous-histogram}
\end{figure}

\section{Conclusion}
\label{sec:conclusion}

We find that, given the choice, one should not invest one's pension from a defined contribution
pension fund
in an annuity. Instead one should invest in a collectivised investment fund. To match the performance
of a collectivised investment fund using an annuity, one requires $20\%$ more initial capital.
It follows that there should be a considerable market for collectivised investment funds. Our results 
indicate that such a fund would not need to be exceptionally large in order to be viable.

It also follows from our work that an optimally managed Collective Defined Contribution pension fund will outperform a Defined Benefit pension fund. Thus collective schemes have the potential to provide a new form of pension, to the mutual benefit of both employee and employer.


\bibliographystyle{plain}
\bibliography{collectivization}

\end{document}